\title{Cross-Layer Design for Downlink Multi-Hop Cloud Radio Access Networks with Network Coding
\footnote{The authors are with The Edward S. Rogers Sr. Department
of Electrical and Computer Engineering, University of Toronto,
Toronto, Ontario, Canada, M5S3G4, (e-mail:lianguot.liu@utoronto.ca; weiyu@comm.utoronto.ca).}
\footnote{This work is supported by the Natural Sciences and Engineering Research Council (NSERC) of Canada through a Collaborative Research and Development (CRD) grant.}}
\author{Liang Liu, \IEEEmembership{Member,~IEEE} and Wei Yu, \IEEEmembership{Fellow,~IEEE}}
\begin{document}
\maketitle \thispagestyle{empty} \vspace{-0.3in}

\begin{abstract}
There are two fundamentally different fronthaul techniques in the downlink communication of cloud radio access network (C-RAN): the \emph{data-sharing strategy} and the \emph{compression-based strategy}. Under the former strategy, each user's message is multicast from the central processor (CP) to all the serving remote radio heads (RRHs) over the fronthaul network, which then cooperatively serve the users through joint beamforming; while under the latter strategy, the user messages are first beamformed then quantized at the CP, and the compressed signal is unicast to the corresponding RRH, which then decompresses its received signal for wireless transmission. Previous works show that in general the compression-based strategy outperforms the data-sharing strategy. This paper, on the other hand, points out that in a C-RAN model where the RRHs are connected to the CP via multi-hop routers, data-sharing can be superior to compression if the network coding technique is adopted for multicasting user messages to the cooperating RRHs, and the RRH's beamforming vectors, the user-RRH association, and the network coding design over the fronthaul network are jointly optimized based on the techniques of sparse optimization and successive convex approximation. This is in comparison to the compression-based strategy, where information is unicast over the fronthaul network by simple routing, and the RRH's compression noise covariance and beamforming vectors, as well as the routing strategy over the fronthaul network are jointly optimized based on the successive convex approximation technique. The observed gain in overall network throughput is due to that information multicast is more efficient than information unicast over the multi-hop fronthaul of a C-RAN.
\end{abstract}

\begin{keywords}
Cloud radio access network (C-RAN), cross-layer design, data-sharing strategy, compression-based strategy, beamforming, network coding, routing, fronthaul constraints, sparse optimization, successive convex approximation.
\end{keywords}

\setlength{\baselineskip}{1.3\baselineskip}

\newtheorem{theorem}{Theorem}
\newtheorem{proposition}{Proposition}
\newtheorem{remark}{Remark}
\newcommand{\mv}[1]{\mbox{\boldmath{$ #1 $}}}

\section{Introduction}\label{sec:Introduction}
As a promising candidate for the 5G cellular roadmap, cloud radio access network (C-RAN) enables a centralized processing architecture, using multiple relay-like base stations (BSs), named remote radio heads (RRHs), to serve mobile users cooperatively under the coordination of a central processor (CP). In the downlink, the benefit of the C-RAN architecture arises from the ability to cooperatively transmit signals from RRHs to minimize the effect of interference. It is worth noting that messages intended for different users in the network originate from the CP. As a result, a key question is to decide the most effective way to convey the useful information about the user messages to the RRHs over the finite-capacity fronthaul links for wireless transmission so as to minimize the unwanted interference seen by the users.

In the literature, a considerable amount of effort has been dedicated to the efficient utilization of the fronthaul capacities in the downlink communication in C-RAN (see e.g., \cite{Simeone16} and the references therein). Among them, the data-sharing strategy and compression-based strategy have attracted a great deal of attention. Specifically, under the data-sharing strategy, the CP shares user messages with the RRHs over the fronthaul network, which then encode the user messages into wireless signals and cooperatively transmit them to users \cite{Gesbert11}--\cite{Zhang16}. Generally speaking, due to the finite-capacity fronthaul links, the message of each user can only be sent to a subset of RRHs for cooperative transmission. Consequently, the user-RRH association strategy plays an essential role on the downlink throughput achieved by the data-sharing strategy. In \cite{Yu14}, the reweighted $\ell_1$-norm based technique is employed to optimize the RRH's beamforming vectors and user-RRH association so as to balance between the cooperation gain over the wireless network as well as the data traffic over the fronthaul network.

Instead of sharing direct user messages, another approach for enabling cooperation is to centrally compute the beamformed signals to be transmitted by the RRHs at the CP. Under the compression-based strategy, the CP compresses these beamformed signals and sends the compressed \mbox{signals} to the corresponding RRHs over the fronthaul links for wireless transmission. However, the compression process at the CP introduces quantization noises that limit the system performance. In \cite{Simeone13}, the transmit covariance for the users and compression noise covariance for the RRHs are jointly optimized to maximize the weighted sum-rate of the users subject to the fronthaul capacity constraints.

\begin{figure}
\begin{center}
\scalebox{0.7}{\includegraphics*{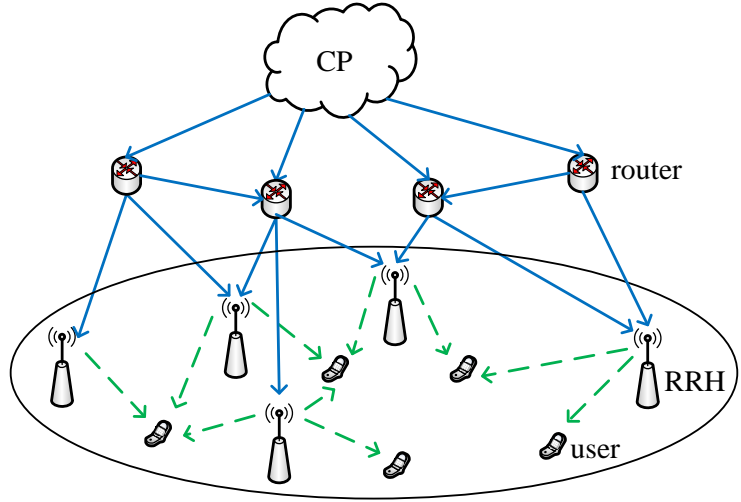}}
\end{center}
\caption{System model of downlink multi-hop C-RAN.}\label{fig1}
\end{figure}

Most previous works in this area focus on the beamforming and/or compression designs across the RRHs alone. However, besides the transmission strategy in the physical-layer, the routing strategy in the network-layer can significantly affect the throughput of downlink C-RAN as well, especially when the fronthaul network consists of edge routers and network processors over multiple hops, as illustrated in Fig. \ref{fig1}. This paper aims to jointly optimize the transmission and routing strategies in the downlink multi-hop C-RAN under both the data-sharing strategy and compression-based strategy and investigate which strategy achieves better throughput performance subject to the fronthaul capacity constraints. The main contributions of this paper are summarized as follows.
\begin{itemize}
\item This paper proposes a cross-layer framework to improve the throughput performance of the downlink multi-hop C-RAN, where the resources available in the physical-layer and network-layer are jointly optimized. Under the date-sharing strategy, a key observation is that such a cross-layer design provides an opportunity to leverage the network coding technique \cite{Yeung00} for multicasting user data to the corresponding RRHs over the multi-hop fronthaul network. A weighted sum-rate maximization problem is thus formulated, where RRH's beamforming vectors, user-RRH association, and network coding based routing are optimized in an overall design. Under the compression-based strategy, simple routing is used to unicast the compressed signal to each RRH. Weighted sum-rate maximization is formulated such that the RRH's compression noise covariance and beamforming vectors and the routing strategy are jointly optimized.
\item Efficient algorithms with monotonic convergence are proposed to solve the formulated weighted sum-rate maximization problems under the data-sharing strategy and compression-based strategy, respectively. \mbox{Specifically}, under the data-sharing strategy, we propose a two-stage algorithm to efficiently solve the studied problem by applying the techniques of sparse optimization and successive convex approximation: first, we approximate each user-RRH's discrete association indicator function by a continuous function and obtain a user-RRH association solution; then we fix this user-RRH association and find the corresponding beamforming and network coding strategy. Furthermore, under the compression-based strategy, a successive convex approximation based algorithm is proposed to solve the weighted sum-rate maximization problem. Both of the proposed algorithms are proved to yield locally optimal solutions that satisfy the Karush-Kuhn-Tucker (KKT) conditions of the studied problems.
\item By numerical results, it is shown that in the downlink multi-hop C-RAN, the data-sharing strategy can outperform the compression-based strategy in terms of throughput. This is because in the multi-hop fronthaul network, information multicast under the data-sharing strategy is more efficient than information unicast under the compression-based strategy. This complements the conclusions in \cite{Yu15,Yu16} which show that if the routing strategy is not considered, the compression-based strategy in general outperforms the data-sharing strategy in the downlink C-RAN in terms of the spectral efficiency and energy efficiency.
\end{itemize}

It is worth noting that under the data-sharing strategy, the joint beamforming and user-RRH association design in the downlink C-RAN has been previously investigated in \cite{Yu14}, but without considering the optimization of the routing strategy. Further, \cite{Luo14} proposes to jointly design the transmission and routing strategies in the downlink C-RAN, but in the model of \cite{Luo14} each user is solely served by one RRH, and the CP unicasts the data of each user to its associated RRH. Our paper differs from \cite{Yu14,Luo14} in allowing cooperative beamforming among RRHs and in the utilization of network coding technique over the fronthaul network for information multicast. Finally, under the compression-based strategy, the cross-layer design of the multi-hop C-RAN has been studied in the uplink in \cite{Simeone15}, where the RRHs utilize a compress-and-forward strategy. However, to the authors' best knowledge, the cross-layer design in the downlink multi-hop C-RAN has not been investigated prior to this work.

The rest of this paper is organized as follows. Section \ref{sec:System Model} presents the system model for the downlink multi-hop C-RAN. Sections \ref{sec:Data-Sharing Strategy} and \ref{sec:Compression-based Strategy} introduce the transmit and routing strategies under the data-sharing scheme and compression-based scheme, respectively. Section \ref{sec:Problem Formulation} formulates the weighted sum-rate maximization problems subject to the routing constraints for both schemes. Sections \ref{sec:Proposed Solution} and \ref{sec:Proposed Solution Compression} present the proposed solutions for the two formulated problems, respectively. Section \ref{sec:Numerical Results} provides numerical results to verify the effectiveness of the proposed cross-layer design and compares the performance between the data-sharing and compression-based strategies. Finally, Section \ref{sec:Conclusion} concludes the paper.

\section{System Model}\label{sec:System Model}
Consider the downlink communication in C-RAN where $N$ RRHs, denoted by the set $\mathcal{N}=\{1,\cdots,N\}$, cooperatively serve $K$ users, denoted by the set $\mathcal{K}=\{1,\cdots,K\}$, under the coordination of the CP. It is assumed that each RRH is equipped with $M\geq 1$ antennas, while each user is equipped with one single antenna. For the wireless network, it is assumed that the $N$ RRHs communicate with the $K$ users over quasi-static flat-fading channels over a given bandwidth of $B$ Hz. The channel from RRH $n$ to user $k$ is denoted by $\mv{h}_{k,n}\in \mathbb{C}^{M\times 1}$, $\forall n,k$. In this paper, it is assumed that the channels to all the $K$ users are perfectly known at the CP. Moreover, we assume that the CP and RRHs communicate over a multi-hop fronthaul network consisting of $J$ routers, denoted by the set $\mathcal{J}=\{1,\cdots,J\}$, and $L$ digital fronthaul links, denoted by the set $\mathcal{L}=\{1,\cdots,L\}$, as shown in Fig. \ref{fig1}. The capacity of each link $l\in \mathcal{L}$ is denoted by $C_l$ bits per second (bps).

This paper considers two fundamentally different fronthaul techniques, namely data-sharing strategy and compression-based strategy, in the downlink multi-hop C-RAN. Under the data-sharing strategy, the CP multicasts each user's message to all the serving RRHs via the multi-hop fronthaul network using the network coding technique \cite{Yeung00}, and each RRH then encodes the user messages into wireless signals and sends them to the users. Under the compression-based strategy, the CP first pre-forms and quantizes the beamformed signal for each RRH in an independent manner, then unicasts each RRH's compressed signal to the corresponding RRH by routing over the fronthaul network. Each RRH then decompresses its received signal and sends it to the users. In the following, we introduce in detail the proposed cross-layer architecture for the downlink multi-hop C-RAN under the data-sharing strategy and compression-based strategy, respectively.

\section{Data-Sharing Strategy}\label{sec:Data-Sharing Strategy}
In this section, we derive the throughput achieved by the data-sharing strategy in the downlink multi-hop C-RAN.

\subsection{Beamforming in the Physical-Layer}
With the data-sharing strategy, user messages are transmitted to the RRHs by the CP via the fronthaul network (refer to Section \ref{sec:Network coding over fronthaul network} for more detail). The equivalent baseband transmit signal of RRH $n$ is
\begin{align}\label{eqn:transmit signal scheme 1}
\mv{x}_n=\sum\limits_{k=1}^K\mv{w}_{k,n}s_k, ~~~ \forall n,
\end{align}where $s_k\sim \mathcal{CN}(0,1)$ denotes the message intended for user $k$, which is modeled as a circularly symmetric complex Gaussian (CSCG) random variable with zero-mean and unit-variance, and $\mv{w}_{k,n}\in \mathbb{C}^{M\times 1}$ denotes RRH $n$'s beamforming vector for user $k$. Suppose that RRH $n$ has a transmit sum-power constraint $P_n$; from (\ref{eqn:transmit signal scheme 1}), we have
\begin{align}\label{eqn:constraint 6}
\mathbb{E}[\mv{x}_n\mv{x}_n^H]=\sum_{k=1}^K\|\mv{w}_{k,n}\|^2\leq P_n, ~~~ \forall n.
\end{align}The received signal of user $k$ can be expressed as
\begin{align}\label{eqn:received signal scheme 1}
y_k=\sum\limits_{n=1}^N\mv{h}_{k,n}^H\mv{x}_n+z_k =\sum\limits_{n=1}^N\mv{h}_{k,n}^H\mv{w}_{k,n}s_k+\sum\limits_{n=1}^N\mv{h}_{k,n}^H\sum\limits_{i\neq k}\mv{w}_{i,n}s_i+z_k, ~ \forall k,
\end{align}where $z_k\sim \mathcal{CN}(0,\sigma^2)$ denotes the additive white Gaussian noise (AWGN) at user $k$.

The signal-to-interference-plus-noise ratio (SINR) for user $k$ is expressed as
\begin{align}\label{eqn:SINR scheme 1}
\gamma_k^{\rm DS}  = \frac{\left|\sum\limits_{n=1}^N\mv{h}_{k,n}^H\mv{w}_{k,n}\right|^2}{\sum\limits_{i\neq k}\left|\sum\limits_{n=1}^N\mv{h}_{k,n}^H\mv{w}_{i,n}\right|^2+\sigma^2}=  \frac{|\mv{h}_k^H\mv{w}_k|^2}{\sum\limits_{i\neq k}|\mv{h}_k^H\mv{w}_i|^2+\sigma^2}, ~ \forall k,
\end{align}where $\mv{h}_k=[\mv{h}_{k,1}^T,\cdots,\mv{h}_{k,N}^T]^T$ denotes the effective channel from all RRHs to user $k$, and $\mv{w}_k=[\mv{w}_{k,1}^T,\cdots,\mv{w}_{k,N}^T]^T$ denotes the effective beamforming vector for user $k$ across all RRHs. The achievable rate of user $k$ in bps under the data-sharing strategy is given by
\begin{align}\label{eqn:constraint 7}
r_k^{\rm DS}\leq B\log_2(1+\gamma_k^{\rm DS}), ~~~ \forall k.
\end{align}

\subsection{Network Coding in the Network-Layer}\label{sec:Network coding over fronthaul network}
Next, consider the data transmission from the CP to RRHs over the digital multi-hop fronthaul network. It is worth noting that if $\mv{w}_{k,n}\neq \mv{0}$, then user $k$ is served by RRH $n$; otherwise, user $k$ is not served by RRH $n$. As a result, we can define the user-RRH association indicator function $\alpha_{k,n}(\mv{w}_{k,n})$ as follows:
\begin{align}\label{eqn:user association}
\alpha_{k,n}(\mv{w}_{k,n})=\left\{\begin{array}{ll}1, & {\rm if} ~ \|\mv{w}_{k,n}\|^2\neq \mv{0}, \\ 0, & {\rm otherwise}, \end{array} \right. ~~~ \forall k,n.
\end{align}If user $k$ is served by RRH $n$, i.e., $\alpha_{k,n}(\mv{w}_{k,n})=1$, the CP needs to send the message $s_k$ to RRH $n$ over the multi-hop fronthaul network at a rate of $r_k^{\rm DS}$ bps; otherwise, the CP does not need to send $s_k$ to RRH $n$. To summarize, there are $K$ multicast sessions in the multi-hop fronthaul network, i.e., $s_1,\cdots,s_K$, and each session $s_k$ has a set $\mathcal{D}_k=\{n:\alpha_{k,n}(\mv{w}_{k,n})=1, n=1,\cdots,N\}$ of destinations.

The traditional approach for information multicast is to make each router replicate and forward its received information to the downstream routers. However, the optimization of such multicast routing is equivalent to the Steiner tree packing problem, which is NP-hard \cite{Li05,Li06}. Moreover, this replicate-and-forward based routing strategy is suboptimal since the coding operations at routers are necessary to achieve the multicast capacity \cite{Yeung00}. In this paper, we propose to apply the network coding technique to multicast each session to its destinations independently, but do not code between different sessions for the following reasons. First, this strategy results in an easy characterization of the routing region,  therefore making the optimal multicast routing problem polynomial time computable. Second, intersession coding provides marginal throughput gains over this approach \cite{Li05,Li06}.

Network coding allows flows for different destinations of a multicast session to share network capacity by being coded together. The pioneering work \cite{Yeung00} shows that for each single multicast session, the maximum multicast rate can be achieved for the entire multicast session if and only if it can be achieved for each multicast receiver independently. Moreover, with coding the actual physical flow on each link need only be the maximum of the individual destination's flows. As a result, the routing constraints for the multi-hop fronthaul network can be formulated as \cite{Li05,Li06}
\begin{align}
& \alpha_{k,n}(\mv{w}_{k,n})r_k^{\rm DS}\leq \sum\limits_{l\in \mathcal{I}(\mathcal{N}_n)}d^{k,n}_l, ~~~ \forall k, n, \label{eqn:constraint 1} \\
& \sum\limits_{l\in \mathcal{O}(\mathcal{J}_j)}d^{k,n}_l=\sum\limits_{l\in \mathcal{I}(\mathcal{J}_j)}d^{k,n}_l, ~~~ \forall k,n,j, \label{eqn:constraint 2} \\
& d^{k,n}_l\leq f^k_l, ~~~ \forall n,k,l, \label{eqn:constraint 3} \\
& \sum\limits_{k=1}^Kf^k_l\leq C_l, ~~~ \forall l, \label{eqn:constraint 4} \\
& f^k_l\geq 0, ~~~ d^{k,n}_l\geq 0, ~~~ \forall k,n,l, \label{eqn:constraint 5}
\end{align}where $d^{k,n}_l$ denotes the conceptual flow rate on link $l\in \mathcal{L}$ for the $k$th multicast session to its potential destination RRH $n$, $f^{k}_l$ denotes the actual flow rate on link $l$ for multicast session $k$, $\mathcal{N}_n$ and $\mathcal{J}_j$ denote RRH $n$ and router $j$, respectively, $\mathcal{I}(\mathcal{N}_n)$ denotes the set of links that are incoming to RRH $n$, and $\mathcal{I}(\mathcal{J}_j)$ and $\mathcal{O}(\mathcal{J}_j)$ denote the set of links that are incoming to and outgoing from router $j$, respectively. The first constraint guarantees that if $n\in \mathcal{D}_k$, then the $k$th session must flow at rate $r_k^{\rm DS}$ to its destination RRH $n$. The second constraint represents the law of flow conservation for conceptual flows. Note that the flow conservation constraint for the CP is not considered because it is automatically guaranteed by constraints (\ref{eqn:constraint 1}) and (\ref{eqn:constraint 2}). The third constraint indicates that the actual flow rate of the $k$th multicast session at each link $l$ is the maximum rate of the conceptual flows of that link to all the destinations, which is the benefit of network coding. The fourth constraint guarantees that the overall information flow rate at each link does not exceed the link capacity. The last constraint guarantees a positive flow rate for all the multicast sessions on all the links.\footnote{Given any flow rate solution satisfying constraints (\ref{eqn:constraint 1}) -- (\ref{eqn:constraint 5}), the code design which determines the content of each flow being transmitted across the network can be found according to \cite{Jaggi05,Wu03}.}

\section{Compression-based Strategy}\label{sec:Compression-based Strategy}
In this section, we derive the throughput achieved by the compression-based strategy in the downlink multi-hop C-RAN.

\subsection{Joint Beamforming and Quantization in the Physical-Layer}
Different from the above data-sharing strategy for which the user messages are sent to the RRHs for beamforming, under the compression-based strategy, the CP pre-forms the beamformed signal for each RRH instead. Similar to (\ref{eqn:transmit signal scheme 1}), the beamformed signal for RRH $n$ can be expressed as $\tilde{\mv{x}}_n=\sum_{k=1}^K\mv{w}_{k,n}s_k$, $\forall n$. Then, the CP compresses the beamformed signals and sends the quantization indices to the corresponding RRHs over the fronthaul network (please refer to Section \ref{sec:Routing over fronthaul network} for more information). The compression noise is modelled as a Gaussian random vector, i.e.,
\begin{align}\label{eqn:compressed signal}
\mv{x}_n=\tilde{\mv{x}}_n+\mv{e}_n=\sum\limits_{k=1}^K\mv{w}_{k,n}s_k+\mv{e}_n, ~~~ \forall n,
\end{align}where $\mv{e}_n\sim \mathcal{CN}(\mv{0},\mv{Q}_n)\in \mathbb{C}^{M\times 1}$, and $\mv{Q}_n\succeq \mv{0}$ denotes the covariance of the compression noise at RRH $n$.

Next, RRH $n$ transmits $\mv{x}_n$ to the users, $\forall n$. The transmit power constraint for RRH $n$ is then expressed as
\begin{align}\label{eqn:compression constraint 5}
\mathbb{E}[\mv{x}_n\mv{x}_n^H]=\sum_{k=1}^K\|\mv{w}_{k,n}\|^2+{\rm tr}(\mv{Q}_n)\leq P_n, ~~~ \forall n.
\end{align}

The baseband received signal at user $k$ is
\begin{align}\label{eqn:received signal compression}
y_k = \sum\limits_{n=1}^N\mv{h}_{k,n}^H\mv{x}_n+z_k= \sum\limits_{n=1}^N\mv{h}_{k,n}^H\mv{w}_{k,n}s_k +\sum\limits_{n=1}^N\mv{h}_{k,n}^H\sum\limits_{i\neq k}\mv{w}_{i,n}s_i +\sum\limits_{n=1}^N\mv{h}_{k,n}^H\mv{e}_n+z_k, ~ \forall k.
\end{align}The SINR of user $k$ is thus expressed as
\begin{align}\label{eqn:SINR compression}
\gamma_k^{\rm COM}  = \frac{\left|\sum\limits_{n=1}^N\mv{h}_{k,n}^H\mv{w}_{k,n}\right|^2}{\sum\limits_{i\neq k}\left|\sum\limits_{n=1}^N\mv{h}_{k,n}^H\mv{w}_{i,n}\right|^2+\sum\limits_{n=1}^N\mv{h}_{k,n}^H\mv{Q}_n\mv{h}_{k,n}+\sigma^2} =   \frac{|\mv{h}_k^H\mv{w}_k|^2}{\sum\limits_{i\neq k}|\mv{h}_k^H\mv{w}_i|^2+\sum\limits_{n=1}^N\mv{h}_{k,n}^H\mv{Q}_n\mv{h}_{k,n}+\sigma^2}, ~ \forall k.
\end{align}The achievable rate of user $k$ in bps under the compression-based strategy is given by
\begin{align}\label{eqn:compression constraint 6}
r_k^{\rm COM}\leq B\log_2(1+\gamma_k^{\rm COM}), ~~~ \forall k.
\end{align}

\subsection{Routing in the Network-Layer}\label{sec:Routing over fronthaul network}
In this paper we assume that the compression process is done independently across RRHs. According to the rate-distortion theory, the fronthaul capacity in bps required to convey the compressed signal $\mv{x}_n$ given in (\ref{eqn:compressed signal}) to RRH $n$ is expressed as
\begin{align}\label{eqn:compression rate}
T_n=BI(\mv{x}_n;\tilde{\mv{x}}_n) =B\log_2\left(\left|\sum\limits_{k=1}^K\mv{w}_{k,n}\mv{w}_{k,n}^H+\mv{Q}_n\right|/|\mv{Q}_n|\right), ~~~ \forall n.
\end{align}Note that instead of multicasting the information to the RRHs as in the data-sharing strategy, under the compression-based strategy, the CP merely unicasts each compressed signal $\mv{x}_n$ to its destination, i.e., RRH $n$. As a result, a simple routing strategy can be adopted for the information unicast over the fronthaul network. The routing constraints for the multihop fronthaul network $\mathcal{G}$ can then be formulated as
\begin{align}
& B\log_2\left(\left|\sum\limits_{k=1}^K\mv{w}_{k,n}\mv{w}_{k,n}^H+\mv{Q}_n\right|/|\mv{Q}_n|\right)\leq \sum\limits_{l\in \mathcal{I}(\mathcal{N}_n)}d^n_l, ~ \forall n, \label{eqn:compression constraint 1} \\
& \sum\limits_{l\in \mathcal{O}(\mathcal{J}_j)}d^n_l=\sum\limits_{l\in \mathcal{I}(\mathcal{J}_j)}d^n_l, ~~~ \forall n,j, \label{eqn:compression constraint 2} \\
& \sum\limits_{n=1}^Nd^n_l\leq C_l, ~~~ \forall l, \label{eqn:compression constraint 3} \\
& d^n_l\geq 0, ~~~ \forall n,l, \label{eqn:compression constraint 4}
\end{align}where $d^n_l$ denotes the flow rate on link $l\in \mathcal{L}$ for the $n$th unicast session , i.e., $\mv{x}_n$. The first constraint guarantees that the $n$th unicast session must flow at rate $T_n$ to its destination RRH $n$. The second constraint represents the law of flow conservation at each router. Note that the flow conservation constraint for the CP is not considered because it is automatically guaranteed by constraints (\ref{eqn:compression constraint 1}) and (\ref{eqn:compression constraint 2}).  The third constraint guarantees that the overall information flow rate at each link does not exceed the link capacity. The last constraint guarantees a positive flow rate for all the unicast sessions on all the links.
\begin{remark}\label{remark3}
By comparing Sections \ref{sec:Data-Sharing Strategy} and \ref{sec:Compression-based Strategy}, it can be observed that the key difference between the data-sharing strategy and compression-based strategy lies in how to utilize the fronthaul network. On one hand, user messages are transmitted over the fronthaul network with the former scheme, while compressed signals are transmitted with the latter scheme. On the other hand, the data-sharing strategy requires information multicast over the fronthaul network since each user's message is sent to all the RRHs serving this user, while the compression-based strategy merely requires information unicast since each RRH's compressed signal is sent to this RRH alone. Such different approaches generate different traffic in the fronthaul network, thus leading to different throughput in the considered multi-hop C-RAN, as will be shown in Section \ref{sec:Numerical Results}.
\end{remark}

\section{Problem Formulations}\label{sec:Problem Formulation}
In this paper, we aim to maximize the throughput of downlink multi-hop C-RAN via a joint optimization of the resources available in the physical-layer and network-layer under both the data-sharing strategy and the compression-based strategy.

\subsection{Data-Sharing Strategy}
For the data-sharing strategy introduced in Section \ref{sec:Data-Sharing Strategy}, we design the beamforming vectors at all RRHs, i.e., $\mv{w}_{k,n}$'s, and network coding strategy, i.e., $d^{k,n}_l$'s and $f^k_l$'s, to maximize the weighted sum-rate of all the users subject to each RRH's transmit power constraint over the wireless network as well as the network coding constraints in the multi-hop fronthaul network, i.e.,
\begin{subequations}\label{eqn:WSR maximization}\begin{align}\mathop{\mathrm{maximize}}_{\{\mv{w}_{k,n},r_k^{\rm DS},d^{k,n}_l,f^k_l\}} & ~ \sum\limits_{k=1}^K\mu_kr_k^{\rm DS} \label{eqn:WSR maximization sub}  \\
\mathrm {subject ~ to} ~~~~ & ~ (\ref{eqn:constraint 6}), ~ (\ref{eqn:constraint 7}), ~ (\ref{eqn:constraint 1})-(\ref{eqn:constraint 5}),
\end{align}\end{subequations}where $\mu_k>0$ denotes the positive rate weight for user $k$.

It is worth noting that without the routing constraints given in (\ref{eqn:constraint 1}) -- (\ref{eqn:constraint 5}), each user should be served by all the RRHs, i.e., $\alpha_{k,n}(\mv{w}_{k,n})=1$. However, with the constraints given in (\ref{eqn:constraint 1}) -- (\ref{eqn:constraint 5}), in general each RRH cannot support all the users in the downlink transmission, and as a result, from (\ref{eqn:user association}), for each RRH $n$, only a subset of users are associated with it, for which the corresponding user association function $\alpha_{k,n}(\mv{w}_{k,n})$ and beamforming vector $\mv{w}_{k,n}$ are non-zero. Moreover, the user association functions $\alpha_{k,n}(\mv{w}_{k,n})$'s also affect the network coding design since they determine the destinations of each multicast session. Therefore, the RRH's beamforming, user-RRH association, and network coding are coupled together and need to be jointly optimized in problem (\ref{eqn:WSR maximization}), which is a challenging problem in general.

It is also worth noting that constraint (\ref{eqn:constraint 1}) induces a \mbox{sparse} beamforming solution to problem (\ref{eqn:WSR maximization}). In the literature, \mbox{sparse} optimization technique has been previously used for the downlink beamforming design problem \cite{Yu16,Luo13}. Problem (\ref{eqn:WSR maximization}) differs from prior work in two aspects. First, \cite{Yu16,Luo13} encourage a sparse beamforming solution by penalizing the objective function with a sparsity term. However, problem (\ref{eqn:WSR maximization}) considered in this paper imposes a set of sparsity constraints which need to be strictly satisfied. Second, in \cite{Yu16,Luo13} the sparsity penalty is independent of the beamforming solution, but in constraint (\ref{eqn:constraint 1}) of our studied problem they are coupled. As a result, the existing sparse optimization techniques, e.g., least-absolute shrinkage and selection operator (LASSO), cannot be applied in this paper.

\subsection{Compression-based Strategy}
For the compression-based strategy introduced in Section \ref{sec:Compression-based Strategy}, we design the beamforming vectors at all RRHs, i.e., $\mv{w}_{k,n}$'s, compression noise covariance across the RRHs, i.e., $\mv{Q}_n$'s, and routing strategy, i.e., $d^n_l$'s, to maximize the weighted sum-rate of all the users subject to each RRH's transmit power constraint over the wireless network as well as the fronthaul capacity constraints in the multi-hop fronthaul network, i.e.,
\begin{subequations}\label{eqn:WSR maximization compression}\begin{align}\mathop{\mathrm{maximize}}_{\{\mv{w}_{k,n},r_k^{\rm COM},\mv{Q}_n,d^n_l\}} & ~ \sum\limits_{k=1}^K\mu_kr_k^{\rm COM} \label{eqn:WSR maximization sub compression}  \\
\mathrm {subject ~ to} ~~~~ & ~ (\ref{eqn:compression constraint 5}), ~ (\ref{eqn:compression constraint 6}), ~ (\ref{eqn:compression constraint 1})-(\ref{eqn:compression constraint 4}).
\end{align}\end{subequations}

It is worth noting that both the user rates given in (\ref{eqn:compression constraint 6}) and the fronthaul rates given in (\ref{eqn:compression rate}) are non-concave functions over the beamforming vectors $\mv{w}_{k,n}$'s and the compression noise covariance $\mv{Q}_n$'s. As a result, problem (\ref{eqn:WSR maximization compression}) is a non-convex optimization problem, and cannot be solved by the conventional convex optimization techniques.

In the following two sections, we propose efficient algorithms to obtain locally optimal solutions to the non-convex problems (\ref{eqn:WSR maximization}) and (\ref{eqn:WSR maximization compression}), respectively.

\section{Optimization of Data-Sharing Strategy}\label{sec:Proposed Solution}
In this section, we propose an efficient algorithm to solve problem (\ref{eqn:WSR maximization}) based on the techniques of sparse optimization as well as successive convex approximation. One main challenge for solving problem (\ref{eqn:WSR maximization}) is the discrete indicator function $\alpha_{k,n}(\mv{w}_{k,n})$ defined in (\ref{eqn:user association}). By applying standard sparse optimization technique, in this paper we use the following continuous function to approximate $\alpha_{k,n}(\mv{w}_{k,n})$:
\begin{align}\label{eqn:approximate function}
g_{\Phi}(\mv{w}_{k,n})=1-e^{-\Phi\|\mv{w}_{k,n}\|^2}, ~~~ \forall k,n,
\end{align}where $\Phi\gg 1$. It can be observed that when $\|\mv{w}_{k,n}\|^2=0$, then $g_{\Phi}(\mv{w}_{k,n})=\alpha_{k,n}(\mv{w}_{k,n})=0$. Otherwise, if $\|\mv{w}_{k,n}\|^2>0$, we have $g_{\Phi}(\mv{w}_{k,n})\rightarrow \alpha_{k,n}(\mv{w}_{k,n})=1$ with $\Phi \gg 1$.


By using $g_{\Phi}(\mv{w}_{k,n})$ to approximate $\alpha_{k,n}(\mv{w}_{k,n})$, $\forall k,n$, problem (\ref{eqn:WSR maximization}) becomes the following continuous problem.
\begin{subequations}\label{eqn:WSR maximization 1}\begin{align}\hspace{-20pt}\mathop{\mathrm{maximize}}_{\{\mv{w}_{k,n},r_k^{\rm DS},d^{k,n}_l,f^k_l\}} & ~ \sum\limits_{k=1}^K\mu_kr_k^{\rm DS} \label{eqn:WSR maximization 1 sub}  \\
\mathrm {subject ~ to} ~~~~ & ~ g_{\Phi}(\mv{w}_{k,n})r_k^{\rm DS}\leq \sum\limits_{l\in \mathcal{I}(\mathcal{N}_n)}d^{k,n}_l, ~ \forall k, n, \label{eqn:constraint 11} \\ & ~ (\ref{eqn:constraint 6}), ~ (\ref{eqn:constraint 7}), ~ (\ref{eqn:constraint 2})-(\ref{eqn:constraint 5}).
\end{align}\end{subequations}

However, since $g_{\Phi}(\mv{w}_{k,n})$ is strictly less than one when $\|\mv{w}_{k,n}\|^2>0$, the solution to problem (\ref{eqn:WSR maximization 1}), which satisfies constraint (\ref{eqn:constraint 11}), may not satisfy constraint (\ref{eqn:constraint 1}) in problem (\ref{eqn:WSR maximization}). As a result, in this paper we propose to solve problem (\ref{eqn:WSR maximization}) in two steps as follows. First, we solve problem (\ref{eqn:WSR maximization 1}) and obtain the beamforming solution, denoted by $\hat{\mv{w}}_{k,n}$'s. The user-RRH association solution is then obtained as follows:
\begin{align}\label{eqn:opt user}
\alpha_{k,n}(\hat{\mv{w}}_{k,n})=\left\{\begin{array}{ll}1, & {\rm if} ~ g_{\Phi}(\hat{\mv{w}}_{k,n})\geq \psi, \\ 0, & {\rm otherwise},\end{array}\right. \forall k,n,
\end{align}where $0\leq \psi\leq 1$ is a threshold to control the user association solution.\footnote{In our simulation, we set $\Phi=50$ and $\psi=0.5$.} Second, we fix this user association solution in problem (\ref{eqn:WSR maximization}) and solve the following simplified problem to refine the beamforming and network coding strategy:
\begin{subequations}\label{eqn:WSR maximization fixed association}\begin{align}\hspace{-9pt}\mathop{\mathrm{maximize}}_{\{\mv{w}_{k,n},r_k^{\rm DS},d^{k,n}_l,f^k_l\}} & ~ \sum\limits_{k=1}^K\mu_kr_k^{\rm DS} \label{eqn:WSR maximization fixed association sub}  \\
\mathrm {subject ~ to} ~~~~ & ~ \alpha_{k,n}(\hat{\mv{w}}_{k,n})r_k^{\rm DS}\leq \sum\limits_{l\in \mathcal{I}(\mathcal{N}_n)}d^{k,n}_l, ~ \forall k,n, \label{eqn:constraint 1 fixed association} \\ & \|\mv{w}_{k,n}\|^2=0, ~ \forall  ~ \alpha_{k,n}(\hat{\mv{w}}_{k,n})=0,  \label{eqn:constraint 2 fixed association} \\ & ~ (\ref{eqn:constraint 6}), ~ (\ref{eqn:constraint 7}), ~ (\ref{eqn:constraint 2})-(\ref{eqn:constraint 5}).
\end{align}\end{subequations}In the following, we show how to solve problems (\ref{eqn:WSR maximization 1}) and (\ref{eqn:WSR maximization fixed association}), respectively.

\subsection{The First Stage: Solution to Problem (\ref{eqn:WSR maximization 1})}
Problem (\ref{eqn:WSR maximization 1}) is a non-convex problem due to constraints (\ref{eqn:constraint 7}) and (\ref{eqn:constraint 11}). As a result, the conventional convex optimization technique cannot be directly applied. In this section, we propose an efficient algorithm to solve problem (\ref{eqn:WSR maximization 1}) suboptimally based on the technique of successive convex approximation.

First, we consider constraint (\ref{eqn:constraint 7}), which is equivalent to
\begin{align}\label{eqn:constraint 7 eqv}
\frac{|\mv{h}_k^H\mv{w}_k|^2}{\sum\limits_{i\neq k}|\mv{h}_k^H\mv{w}_i|^2+\sigma^2}\geq 2^{\frac{r_k^{\rm DS}}{B}}-1, ~~~ \forall k.
\end{align}By introducing a set of auxiliary variables $\eta_k\geq 0$'s, $k=1,\cdots,K$, it can be shown that constraint (\ref{eqn:constraint 7 eqv}) is equivalent to the following two constraints:
\begin{align}
&  \mv{h}_k^H\mv{w}_k\geq \sqrt{(2^{\frac{r_k^{\rm DS}}{B}}-1)\eta_k}, ~~~ \forall k, \label{eqn:constraint 7 eqv 1} \\
&  \sqrt{\sum\limits_{i\neq k}|\mv{h}_k^H\mv{w}_i|^2+\sigma^2}\leq \sqrt{\eta_k}, ~~~ \forall k. \label{eqn:constraint 7 eqv 2}
\end{align}As a result, $\eta_k$ can be interpreted as the interference constraint for user $k$. Constraint (\ref{eqn:constraint 7 eqv 2}) can be further transformed into the following convex second-order cone (SOC) constraint:
\begin{align}\label{eqn:P1 constraint 2}
\left\|[\mv{h}_k^H\mv{w}_1,\cdots,\mv{h}_k^H\mv{w}_{k-1},\mv{h}_k^H\mv{w}_{k+1},\cdots,\mv{h}_k^H\mv{w}_K]^T\right\|
\leq \sqrt{\eta_k-\sigma^2}, ~~~ \forall k.
\end{align}For constraint (\ref{eqn:constraint 7 eqv 1}), $\sqrt{(2^{r_k^{\rm DS}/B}-1)\eta_k}$ is not a convex function. However, given any $\tilde{\beta}_k$, the following convex function is an upper bound for $\sqrt{(2^{r_k^{\rm DS}/B}-1)\eta_k}$:
\begin{align}\label{eqn:app}
\hspace{-5pt}f_{\tilde{\beta}_k}(r_k^{\rm DS},\eta_k)\hspace{-2pt}=\hspace{-2pt}\frac{\tilde{\beta}_k\eta_k}{2}+\frac{2^{\frac{r_k^{\rm DS}}{B}}-1}{2\tilde{\beta}_k}\geq \sqrt{(2^{\frac{r_k^{\rm DS}}{B}}-1)\eta_k}, ~ \forall k,
\end{align}where the equality holds if and only if $\tilde{\beta}_k=\sqrt{(2^{r_k^{\rm DS}/B}-1)/\eta_k}$. As a result, we use the following convex constraint to approximate constraint (\ref{eqn:constraint 7 eqv 1}):
\begin{align}\label{eqn:P1 constraint 1}
\mv{h}_k^H\mv{w}_k \geq \frac{\tilde{\beta}_k\eta_k}{2}+\frac{2^{\frac{r_k^{\rm DS}}{B}}-1}{2\tilde{\beta}_k},  ~ \forall k.
\end{align}

After approximating the non-convex constraint (\ref{eqn:constraint 7}) by the convex ones (\ref{eqn:P1 constraint 2}) and (\ref{eqn:P1 constraint 1}), we come to constraint (\ref{eqn:constraint 11}). First, we take the natural logarithm of the left-hand side (LHS) and right-hand side (RHS) of inequality constraint (\ref{eqn:constraint 11}), which results in
\begin{align}\label{eqn:constraint 1 log}
\log(1-e^{-\Phi\|\mv{w}_{k,n}\|^2})+\log(r_k^{\rm DS}) \leq \log\left(\sum\limits_{l\in \mathcal{I}(\mathcal{N}_n)}d^{k,n}_l\right), ~ \forall k, n.
\end{align}It can be shown that $\log(\sum_{l\in \mathcal{I}(\mathcal{N}_n)}d^{k,n}_l)$ is a concave function over $d^{k,n}_l$'s. However, the LHS of constraint (\ref{eqn:constraint 1 log}) is still non-convex. Since $\log(1-e^{-\Phi x})$ is a concave function over $x$, its first-order approximation serves as its upper bound. Specifically, given any $\tilde{x}$, the first-order approximation of $\log(1-e^{-\Phi x})$ can be expressed as
\begin{align}\label{eqn:first-order approximation}
\log(1-e^{-\Phi x}) \leq \frac{\Phi e^{-\Phi \tilde{x}}(x-\tilde{x})}{1-e^{-\Phi\tilde{x}}}+\log(1-e^{-\Phi \tilde{x}}),
\end{align}where the equality holds if and only if $x=\tilde{x}$. By substituting $x$ with $\|\mv{w}_{k,n}\|^2$, given any $\tilde{\mv{w}}_{k,n}$, a convex upper bound for $\log(1-e^{-\Phi\|\mv{w}_{k,n}\|^2})$ is expresses as
\begin{align}\label{eqn:constraint 1 log eqv}
\log(1-e^{-\Phi\|\mv{w}_{k,n}\|^2}) \leq \frac{\Phi e^{-\Phi \|\tilde{\mv{w}}_{k,n}\|^2}\|\mv{w}_{k,n}\|^2}{1-e^{-\Phi\|\tilde{\mv{w}}_{k,n}\|^2}}+\phi(\tilde{\mv{w}}_{k,n}), ~\forall k,n,
\end{align}where
\begin{align*}
\phi(\tilde{\mv{w}}_{k,n})=-\frac{\Phi e^{-\Phi \|\tilde{\mv{w}}_{k,n}\|^2}\|\tilde{\mv{w}}_{k,n}\|^2}{1-e^{-\Phi\|\tilde{\mv{w}}_{k,n}\|^2}}+\log(1-e^{-\Phi \|\tilde{\mv{w}}_{k,n}\|^2}).
\end{align*}The equality holds if and only if $\mv{w}_{k,n}=\tilde{\mv{w}}_{k,n}$.

Similarly, given any point $\tilde{r}_{k}^{\rm DS}$, the concave function $\log(r_k^{\rm DS})$ can be approximated by its first-order approximation as follows:
\begin{align}\label{eqn:first-order approximation 2}
\log(r_k^{\rm DS})\leq \frac{r_k^{\rm DS}-\tilde{r}_k^{\rm DS}}{\tilde{r}_k^{\rm DS}}+\log(\tilde{r}_k^{\rm DS}), ~~~ \forall k,
\end{align}where the equality holds if and only if $r_k^{\rm DS}=\tilde{r}_k^{\rm DS}$.

With (\ref{eqn:constraint 1 log eqv}) and (\ref{eqn:first-order approximation 2}), the non-convex constraint (\ref{eqn:constraint 1 log}) can be approximated by the following convex constraint:
\begin{align}\label{eqn:constraint 1 new 1}
\frac{\Phi e^{-\Phi \|\tilde{\mv{w}}_{k,n}\|^2}\|\mv{w}_{k,n}\|^2}{1-e^{-\Phi\|\tilde{\mv{w}}_{k,n}\|^2}}+ \frac{r_k^{\rm DS}-\tilde{r}_k^{\rm DS}}{\tilde{r}_k^{\rm DS}}+\phi(\tilde{\mv{w}}_{k,n})+\log(\tilde{r}_k^{\rm DS}) \leq \log\left(\sum\limits_{l\in \mathcal{I}(\mathcal{N}_n)}d^{k,n}_l\right), ~~~ \forall k, n.
\end{align}

To summarize, given $\tilde{r}_k^{\rm DS}$'s, $\tilde{\mv{w}}_{k,n}$'s, and $\tilde{\beta}_k$'s, the non-convex constraints (\ref{eqn:constraint 7}) and (\ref{eqn:constraint 11}) in problem (\ref{eqn:WSR maximization 1}) are approximated by the convex constraints given in (\ref{eqn:P1 constraint 2}), (\ref{eqn:P1 constraint 1}), and (\ref{eqn:constraint 1 new 1}). As a result, with any given $\tilde{r}_k^{\rm DS}$'s, $\tilde{\mv{w}}_{k,n}$'s, and $\tilde{\beta}_k$'s, problem (\ref{eqn:WSR maximization 1}) is approximated by the following convex problem.
\begin{subequations}\label{eqn:P1}\begin{align}\mathop{\mathrm{maximize}}_{\{\mv{w}_{k,n},r_k^{\rm DS},\eta_k,d^{k,n}_l,f^k_l\}} & ~ \sum\limits_{k=1}^K\mu_kr_k^{\rm DS} \label{eqn:P1 sub}  \\
\mathrm {subject ~ to} ~~~~~ & ~ (\ref{eqn:constraint 6}),  (\ref{eqn:P1 constraint 2}),  (\ref{eqn:P1 constraint 1}),  (\ref{eqn:constraint 1 new 1}), (\ref{eqn:constraint 2}) - (\ref{eqn:constraint 5}).
\end{align}\end{subequations}Since problem (\ref{eqn:P1}) is a convex problem, it can be globally solved by CVX \cite{Boyd11}. The successive convex approximation method based algorithm to problem (\ref{eqn:WSR maximization 1}) is summarized in Algorithm \ref{table1}, which iteratively updates $\tilde{r}_k^{\rm DS}$'s, $\tilde{\mv{w}}_{k,n}$'s, and $\tilde{\beta}_k$'s based on the solution to problem (\ref{eqn:P1}) as shown in Step 2). The convergence behaviour of Algorithm \ref{table1} is guaranteed in the following proposition.

\begin{algorithm}[t]
{\bf Initialization}: Set the initial values for $\tilde{\mv{w}}_{k,n}$'s, $\tilde{r}_k^{\rm DS}$'s, and $\tilde{\beta}_k$'s and set $t=1$; \\
{\bf Repeat}:
\begin{enumerate}
\item Find the optimal solution to problem (\ref{eqn:P1}) using CVX as $\{\mv{w}_{k,n}^{(t)},(r_k^{\rm DS})^{(t)},\eta_k^{(t)},(d^{k,n}_l)^{(t)},(f^k_l)^{(t)}\}$;
\item Update $\tilde{\mv{w}}_{k,n}=\mv{w}_{k,n}^{(t)}$, $\tilde{r}_k^{\rm DS}=(r_k^{\rm DS})^{(t)}$, and $\tilde{\beta}_k=\sqrt{(2^{(r_k^{\rm DS})^{(t)}/B}-1)/\eta_k^{(t)}}$, $\forall k,n$;
\item $t=t+1$.
\end{enumerate}
{\bf Until} convergence
\caption{Proposed Algorithm for Solving Problem (\ref{eqn:WSR maximization 1})}
\label{table1}
\end{algorithm}

\begin{proposition}\label{proposition1}
Monotonic convergence of Algorithm \ref{table1} is guaranteed, i.e., $\sum_{k=1}^K\mu_k(r_k^{\rm DS})^{(t)}\geq \sum_{k=1}^K \mu_k(r_k^{\rm DS})^{(t-1)}$. Moreover, the converged solution satisfies all the constraints as well as the KKT conditions of problem (\ref{eqn:WSR maximization 1}).
\end{proposition}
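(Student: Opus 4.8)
The plan is to follow the standard successive convex approximation (inner-approximation) argument in two parts: first establish that the objective sequence is monotone and convergent, then show that the limit point is feasible for and satisfies the KKT conditions of problem (\ref{eqn:WSR maximization 1}). The cornerstone of both parts is that each convex surrogate introduced to replace the non-convex constraints (\ref{eqn:constraint 7}) and (\ref{eqn:constraint 11}) enjoys two structural properties at its expansion point $(\tilde{\mv{w}}_{k,n},\tilde{r}_k^{\rm DS},\tilde{\beta}_k)$: \emph{(i)} it is a global upper bound of the true function it replaces --- $f_{\tilde{\beta}_k}$ upper-bounds $\sqrt{(2^{r_k^{\rm DS}/B}-1)\eta_k}$ by AM--GM as in (\ref{eqn:app}), while the tangent planes in (\ref{eqn:constraint 1 log eqv}) and (\ref{eqn:first-order approximation 2}) upper-bound their respective concave functions; and \emph{(ii)} it touches the true function with matching value and gradient at the expansion point, equality being attained exactly by the updates in Step~2) of Algorithm~\ref{table1}. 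Property \emph{(i)} makes problem (\ref{eqn:P1}) a \emph{restriction} of (\ref{eqn:WSR maximization 1}), i.e.\ its feasible set is contained in that of (\ref{eqn:WSR maximization 1}), so every iterate is automatically feasible for (\ref{eqn:WSR maximization 1}); property \emph{(ii)} makes the previous iterate feasible for the subproblem solved at the current iteration.

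For monotonicity, I would build (\ref{eqn:P1}) at iteration $t$ around $\tilde{\mv{w}}_{k,n}=\mv{w}_{k,n}^{(t-1)}$, $\tilde{r}_k^{\rm DS}=(r_k^{\rm DS})^{(t-1)}$, and $\tilde{\beta}_k=\sqrt{(2^{(r_k^{\rm DS})^{(t-1)}/B}-1)/\eta_k^{(t-1)}}$ (well defined since $\eta_k^{(t-1)}\geq\sigma^2>0$ by (\ref{eqn:P1 constraint 2})). Tightness in property \emph{(ii)} guarantees that the previous iterate satisfies the surrogate constraints (\ref{eqn:P1 constraint 1}) and (\ref{eqn:constraint 1 new 1}) with equality and still satisfies the unchanged constraints (\ref{eqn:constraint 6}), (\ref{eqn:P1 constraint 2}), (\ref{eqn:constraint 2})--(\ref{eqn:constraint 5}); hence it is feasible for (\ref{eqn:P1}). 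Because $\{\mv{w}_{k,n}^{(t)},\ldots\}$ is the \emph{global} optimum of the convex problem (\ref{eqn:P1}) while the previous iterate is merely a feasible point, the optimal value cannot decrease, which yields $\sum_k\mu_k(r_k^{\rm DS})^{(t)}\geq\sum_k\mu_k(r_k^{\rm DS})^{(t-1)}$. Since the power constraints (\ref{eqn:constraint 6}) cap every SINR and hence every rate, the objective is bounded above, so this monotone sequence converges.

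For the feasibility and KKT claim, I would pass to a limit point $(\mv{w}_{k,n}^\star,\ldots,\tilde{\beta}_k^\star)$ of the iterates, which exists because the feasible set of (\ref{eqn:WSR maximization 1}) is compact (the power constraints (\ref{eqn:constraint 6}) bound the beamformers and the link capacities bound the flows). Feasibility for (\ref{eqn:WSR maximization 1}) follows at once: every iterate is feasible by property \emph{(i)}, and the true constraint functions are continuous, so the limit is feasible. The limit is a fixed point of the map in Step~2), meaning the surrogate (\ref{eqn:P1}) expanded at the limit has the limit itself as optimal solution. I would write the KKT system of this convex surrogate, invoking a constraint qualification (e.g.\ Slater's condition on (\ref{eqn:P1})) to guarantee existence of multipliers, and then transfer it to (\ref{eqn:WSR maximization 1}): by property \emph{(ii)}, at the expansion point the surrogate and the true constraint functions share identical values (so primal feasibility and complementary slackness carry over verbatim) and identical gradients (so the surrogate's stationarity equation coincides term-by-term with the stationarity equation of the Lagrangian of (\ref{eqn:WSR maximization 1})). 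Thus the limit point, together with the final multipliers, satisfies the KKT conditions of (\ref{eqn:WSR maximization 1}).

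The main obstacle will be the gradient-consistency half of property \emph{(ii)} --- that each surrogate is \emph{tangent} to, not merely touching, the function it replaces. For the two tangent-plane approximations this holds by construction, but for $f_{\tilde{\beta}_k}$ it must be checked that the gradient of $\tfrac{\tilde{\beta}_k\eta_k}{2}+\tfrac{2^{r_k^{\rm DS}/B}-1}{2\tilde{\beta}_k}$ in $(r_k^{\rm DS},\eta_k)$ equals that of $\sqrt{(2^{r_k^{\rm DS}/B}-1)\eta_k}$ when $\tilde{\beta}_k=\sqrt{(2^{r_k^{\rm DS}/B}-1)/\eta_k}$; this follows because the nonnegative gap between the two attains its global minimum of zero at that point and is differentiable there, forcing its gradient to vanish. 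The remaining delicate point is making the limit argument rigorous --- ensuring the dual multiplier sequence stays bounded so that a convergent subsequence exists and the constraint qualification persists in the limit.
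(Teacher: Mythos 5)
Your proposal is correct and follows essentially the same route as the paper's proof: monotonicity via feasibility of the previous iterate in the new surrogate problem, feasibility of the limit via the upper-bounding property of the surrogates, and KKT via the value/gradient-matching conditions at the expansion point. The only difference is that you re-derive the limit-point/multiplier-transfer argument explicitly, whereas the paper simply cites the inner-approximation theorem of Marks and Wright and verifies its three hypotheses (checking gradient consistency of $f_{\tilde{\beta}_k}$ by direct differentiation rather than via your vanishing-gap argument).
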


\begin{proof}
Please refer to Appendix \ref{appendix1}.
\end{proof}

\subsection{The Second Stage: Solution to Problem (\ref{eqn:WSR maximization fixed association})}
Given the user association in problem (\ref{eqn:WSR maximization fixed association}), constraint (\ref{eqn:constraint 1 fixed association}) becomes convex. By using (\ref{eqn:P1 constraint 2}) and (\ref{eqn:P1 constraint 1}) to approximate the non-convex constraint (\ref{eqn:constraint 7}), given any $\tilde{\beta}_k$'s, problem (\ref{eqn:WSR maximization fixed association}) can be approximated by the following convex problem.
\begin{subequations}\label{eqn:WSR maximization fixed association 1}\begin{align}\mathop{\mathrm{maximize}}_{\{\mv{w}_{k,n},r_k^{\rm DS},d^{k,n}_l,f^k_l\}} & ~ \sum\limits_{k=1}^K\mu_kr_k^{\rm DS} \label{eqn:WSR maximization fixed association 1 sub}  \\
\mathrm {subject ~ to} ~~~~ &  ~ \|\mv{w}_{k,n}\|^2\leq 0, ~ \forall \alpha_{k,n}(\hat{\mv{w}}_{k,n})=0, \label{eqn:constraint 2 fixed association 1} \\ & ~ (\ref{eqn:constraint 6}),  (\ref{eqn:P1 constraint 2}),  (\ref{eqn:P1 constraint 1}),  (\ref{eqn:constraint 1 fixed association}),  (\ref{eqn:constraint 2})-(\ref{eqn:constraint 5}).
\end{align}\end{subequations}Since problem (\ref{eqn:WSR maximization fixed association 1}) is a convex problem, it can be efficiently solved. The successive convex approximation based algorithm to problem (\ref{eqn:WSR maximization fixed association}) is summarized in Algorithm \ref{table2}. Similar to Proposition \ref{proposition1}, the convergence behaviour of Algorithm \ref{table2} is guaranteed in the following proposition.

\begin{algorithm}[t]
{\bf Initialization}: Set the initial values for $\tilde{\beta}_k$'s and set $t=1$; \\
{\bf Repeat}:
\begin{enumerate}
\item Find the optimal solution to problem (\ref{eqn:WSR maximization fixed association 1}) using CVX as $\{\mv{w}_{k,n}^{(t)},(r_k^{\rm DS})^{(t)},\eta_k^{(t)},(d^{k,n}_l)^{(t)},(f^k_l)^{(t)}\}$;
\item Update $\tilde{\beta}_k=\sqrt{(2^{(r_k^{\rm DS})^{(t)}/B}-1)/\eta_k^{(t)}}$, $\forall k,n$;
\item $t=t+1$.
\end{enumerate}
{\bf Until} convergence
\caption{Proposed Algorithm for Solving Problem (\ref{eqn:WSR maximization fixed association})}
\label{table2}
\end{algorithm}

\begin{proposition}\label{proposition2}
Monotonic convergence of Algorithm \ref{table2} is guaranteed, i.e., $\sum_{k=1}^K\mu_k(r_k^{\rm DS})^{(t)}\geq \sum_{k=1}^K \mu_k(r_k^{\rm DS})^{(t-1)}$. Moreover, the converged solution satisfies all the constraints as well as the KKT conditions of problem (\ref{eqn:WSR maximization fixed association}).
\end{proposition}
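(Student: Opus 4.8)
The plan is to establish Proposition \ref{proposition2} by the same successive convex approximation (SCA) machinery used for Proposition \ref{proposition1}, but exploiting the fact that the second-stage problem (\ref{eqn:WSR maximization fixed association}) is considerably simpler: once the user--RRH association $\alpha_{k,n}(\hat{\mv{w}}_{k,n})$ is frozen, the routing constraint (\ref{eqn:constraint 1 fixed association}) and the zero-forcing constraint (\ref{eqn:constraint 2 fixed association 1}) are already convex and are kept exactly, so the \emph{only} non-convexity left to approximate is the rate constraint (\ref{eqn:constraint 7}). The whole argument therefore rests on the two structural properties of the surrogate $f_{\tilde{\beta}_k}(r_k^{\rm DS},\eta_k)$ in (\ref{eqn:app}): it is a \emph{global upper bound} on $\sqrt{(2^{r_k^{\rm DS}/B}-1)\eta_k}$, and it is \emph{tight in both value and gradient} at the point where $\tilde{\beta}_k=\sqrt{(2^{r_k^{\rm DS}/B}-1)/\eta_k}$. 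First I would record that, because $f_{\tilde{\beta}_k}$ upper-bounds the true right-hand side, any point feasible for the approximate constraint (\ref{eqn:P1 constraint 1}) is feasible for (\ref{eqn:constraint 7 eqv 1}) and hence for the original (\ref{eqn:constraint 7}); thus the feasible set of (\ref{eqn:WSR maximization fixed association 1}) is an inner approximation of that of (\ref{eqn:WSR maximization fixed association}), which immediately guarantees that every iterate, and therefore the limit, is feasible for (\ref{eqn:WSR maximization fixed association}).

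Next I would prove monotonicity. At the start of iteration $t+1$ the update in Step 2) of Algorithm \ref{table2} sets $\tilde{\beta}_k=\sqrt{(2^{(r_k^{\rm DS})^{(t)}/B}-1)/\eta_k^{(t)}}$, which is exactly the value that makes (\ref{eqn:app}) hold with equality at the iteration-$t$ solution. Consequently the previous iterate $\{\mv{w}_{k,n}^{(t)},(r_k^{\rm DS})^{(t)},\eta_k^{(t)},(d^{k,n}_l)^{(t)},(f^k_l)^{(t)}\}$ satisfies (\ref{eqn:P1 constraint 1}) with equality and, since all remaining constraints of (\ref{eqn:WSR maximization fixed association 1}) are unchanged, is feasible for the problem solved at iteration $t+1$. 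As (\ref{eqn:WSR maximization fixed association 1}) is then maximized over a set containing this point, its optimal value cannot decrease, giving $\sum_{k=1}^K\mu_k(r_k^{\rm DS})^{(t+1)}\geq \sum_{k=1}^K\mu_k(r_k^{\rm DS})^{(t)}$. Because the per-RRH power constraints (\ref{eqn:constraint 6}) bound each SINR $\gamma_k^{\rm DS}$ and hence each rate, the non-decreasing objective is bounded above, so by the monotone convergence theorem it converges; compactness of the feasible region then yields a convergent subsequence of iterates.

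Finally, for the KKT claim I would invoke the standard SCA fixed-point argument. Let $(\mv{w}_{k,n}^\star,(r_k^{\rm DS})^\star,\eta_k^\star,\dots)$ denote a limit point, with $\tilde{\beta}_k^\star=\sqrt{(2^{(r_k^{\rm DS})^\star/B}-1)/\eta_k^\star}$. At this point the surrogate in (\ref{eqn:P1 constraint 1}) matches the true constraint (\ref{eqn:constraint 7 eqv 1}) not only in value but also in gradient: differentiating $f_{\tilde{\beta}_k}$ with respect to $(r_k^{\rm DS},\eta_k)$ and substituting $\tilde{\beta}_k^\star$ reproduces exactly the gradient of $\sqrt{(2^{r_k^{\rm DS}/B}-1)\eta_k}$. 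Since the objective, the power constraint (\ref{eqn:constraint 6}), the SOC constraint (\ref{eqn:P1 constraint 2}), and the frozen-association constraints (\ref{eqn:constraint 1 fixed association})--(\ref{eqn:constraint 2 fixed association 1}) are carried over unchanged, this value-and-gradient consistency implies that the stationarity, primal-feasibility, and complementary-slackness conditions of the converged convex subproblem (\ref{eqn:WSR maximization fixed association 1}) coincide with the KKT conditions of the original problem (\ref{eqn:WSR maximization fixed association}), provided a constraint qualification (e.g., Slater's condition, which holds generically for the convex subproblem) is assumed so that the KKT system is well-defined.

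The main obstacle I anticipate is precisely the gradient-consistency step: establishing that $f_{\tilde{\beta}_k}$ is not merely a tight upper bound but a \emph{tangent} one, so that the Lagrangian stationarity of the surrogate problem transfers to the original. A secondary technicality is the equivalence between (\ref{eqn:constraint 7}) and its split form (\ref{eqn:constraint 7 eqv 1})--(\ref{eqn:constraint 7 eqv 2}): one must argue that the auxiliary variable $\eta_k$ is tight, i.e. equal to the interference-plus-noise power, at optimality so that no spurious slackness is introduced, and handle the implicit assumption that $\mv{h}_k^H\mv{w}_k$ can be taken real via the usual phase-rotation argument, which leaves the problem unchanged.
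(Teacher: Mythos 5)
Your proposal is correct and follows essentially the same route as the paper: the paper gives no separate proof for Proposition \ref{proposition2}, deferring to the Appendix argument for Proposition \ref{proposition1}, which is exactly what you reproduce --- feasibility of the previous iterate in the updated subproblem for monotonicity, the upper-bound (inner-approximation) property for feasibility of the limit, and the value-plus-gradient tangency of $f_{\tilde{\beta}_k}$ (the Marks--Wright conditions) for the KKT claim. Your added remarks on constraint qualification and the tightness of $\eta_k$ are refinements the paper leaves implicit, not a different method.
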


The overall two-stage algorithm to problem (\ref{eqn:WSR maximization}) is summarized in Algorithm \ref{table3}.
\begin{algorithm}[t]
\begin{enumerate}
\item Solve problem (\ref{eqn:WSR maximization 1}) based on Algorithm \ref{table1} and obtain the user-RRH association according to (\ref{eqn:opt user});
\item Solve problem (\ref{eqn:WSR maximization fixed association}) based on Algorithm \ref{table2} and obtain the beamforming and network coding solution.
\end{enumerate}
\caption{Overall Algorithm for Solving Problem (\ref{eqn:WSR maximization})}
\label{table3}
\end{algorithm}

\begin{remark}\label{remark2}
It is worth noting that \cite{Yu14} studies a similar problem of jointly optimizing the user-RRH association with the beamforming vectors. To deal with the discrete user-RRH association indicator functions (\ref{eqn:user association}), in \cite{Yu14} the reweighted $\ell_1$-norm technique is employed to approximate the fronthaul constraint (\ref{eqn:constraint 1}) by a set of weighted per-RRH power constraints. Then, an alternating optimization based iterative algorithm is proposed to find a beamforming and user-RRH association solution. Although the algorithm in \cite{Yu14} works well in practice, a rigorous convergence proof is not available. In contrast, the algorithm proposed in this paper always converge, but the performance depends on the tuning of the approximation parameters $\Phi$ and $\psi$.
\end{remark}

\section{Optimization of Compression-based Strategy}\label{sec:Proposed Solution Compression}
In this section, we propose an efficient algorithm to solve problem (\ref{eqn:WSR maximization compression}) based on the technique of successive convex approximation. There are two challenges to solve problem (\ref{eqn:WSR maximization compression}): the non-convex user rate constraint given in (\ref{eqn:compression constraint 6}) and fronthaul constraint given in (\ref{eqn:compression constraint 1}). In the following, we show how to circumvent the above two challenges.

First, similar to Section \ref{sec:Proposed Solution}, by introducing a set of auxiliary variables $\eta_k\geq 0$'s, $k=1,\cdots,K$, it can be shown that constraint (\ref{eqn:compression constraint 6}) is equivalent to the following two constraints:
\begin{align}
&  \mv{h}_k^H\mv{w}_k\geq \sqrt{(2^{\frac{r_k^{\rm COM}}{B}}-1)\eta_k}, ~~~ \forall k, \label{eqn:compression constraint 6 eqv 1} \\
&  \sqrt{\sum\limits_{i\neq k}|\mv{h}_k^H\mv{w}_i|^2+\sum\limits_{n=1}^N\mv{h}_{k,n}^H\mv{Q}_n\mv{h}_{k,n}+\sigma^2}\leq \sqrt{\eta_k}, ~~~ \forall k. \label{eqn:compression constraint 6 eqv 2}
\end{align}

Constraint (\ref{eqn:compression constraint 6 eqv 2}) can be further transformed into the following convex SOC constraint:
\begin{align}\label{eqn:P3 constraint 2}
\left\|[\mv{h}_k^H\mv{w}_1,\cdots,\mv{h}_k^H\mv{w}_{k-1},\mv{h}_k^H\mv{w}_{k+1},\cdots,\mv{h}_k^H\mv{w}_K]^T\right\|
\leq \sqrt{\eta_k-\sum\limits_{n=1}^N{\rm tr}(\mv{H}_{k,n}\mv{Q}_n)-\sigma^2}, ~~~ \forall k,
\end{align}where $\mv{H}_{k,n}=\mv{h}_{k,n}\mv{h}_{k,n}^H$. Moreover, since the non-convex constraint (\ref{eqn:compression constraint 6 eqv 1}) has the same form as constraint (\ref{eqn:constraint 7 eqv 1}) in Section \ref{sec:Proposed Solution}, we can use the convex constraint given in (\ref{eqn:P1 constraint 1}) to approximate it, where $r_k^{\rm DS}$ is substituted by $r_k^{\rm COM}$. As a consequence, the non-convex constraint (\ref{eqn:compression constraint 6}) is approximated by the convex constraints (\ref{eqn:P1 constraint 1}) and (\ref{eqn:P3 constraint 2}).

Next, we deal with the non-convex constraint (\ref{eqn:compression constraint 1}). Since $\log_2|\mv{X}_n|$ is a concave function over $\mv{X}_n\succeq \mv{0}$, its first-order approximation function at any point $\tilde{\mv{X}}_n\succeq \mv{0}$ is an upper bound for it, i.e.,
\begin{align}\label{eqn:compression first-order approximation}
\log_2|\mv{X}_n|\leq \log_2|\tilde{\mv{X}}_n|+\frac{1}{\ln 2}{\rm tr}(\tilde{\mv{X}}_n^{-1}(\mv{X}_n-\tilde{\mv{X}}_n)),
\end{align}where the equality holds if and only if $\mv{X}_n=\tilde{\mv{X}}_n$. By setting $\mv{X}_n=\sum_{k=1}^K\mv{w}_{k,n}\mv{w}_{k,n}^H+\mv{Q}_n$, at any point $\tilde{\mv{X}}_n=\sum_{k=1}^K\tilde{\mv{w}}_{k,n}\tilde{\mv{w}}_{k,n}^H+\tilde{\mv{Q}}_n$, we have
\begin{align}
T_n = & \log_2\frac{\left|\sum\limits_{k=1}^K\mv{w}_{k,n}\mv{w}_{k,n}^H+\mv{Q}_n\right|}{|\mv{Q}_n|} \nonumber \\  \leq & \log_2|\tilde{\mv{X}}_n|+\frac{{\rm tr}\left(\tilde{\mv{X}}_n^{-1}\left(\sum\limits_{k=1}^K\mv{w}_{k,n}\mv{w}_{k,n}^H+\mv{Q}_n-\tilde{\mv{X}}_n\right)\right)}{\ln 2}-\log_2|\mv{Q}_n|  \nonumber \\ = & \log_2|\tilde{\mv{X}}_n|+\frac{\left(\sum\limits_{k=1}^K\mv{w}_{k,n}^H\tilde{\mv{X}}_n^{-1}\mv{w}_{k,n}+{\rm tr}(\tilde{\mv{X}}_n^{-1}\mv{Q}_n-\mv{I})\right)}{\ln 2} -\log_2|\mv{Q}_n|, ~~~ \forall n.
\end{align}As a result, in this paper we approximate the non-convex constraint (\ref{eqn:compression constraint 1}) by the following convex one:
\begin{align}\label{eqn:compression constraint 1 convex}
\log_2|\tilde{\mv{X}}_n|+\frac{1}{\ln 2}\left(\sum\limits_{k=1}^K\mv{w}_{k,n}^H\tilde{\mv{X}}_n^{-1}\mv{w}_{k,n}+{\rm tr}(\tilde{\mv{X}}_n^{-1}\mv{Q}_n-\mv{I})\right) -\log_2|\mv{Q}_n|\leq \sum\limits_{l\in \mathcal{I}(\mathcal{N}_n)}d^n_l, ~~~ \forall n.
\end{align}

To summarize, given $\tilde{\beta}_k$'s, $\tilde{\mv{w}}_{k,n}$'s, and $\tilde{\mv{Q}}_n$'s, problem (\ref{eqn:WSR maximization compression}) is approximated by the following convex problem.
\begin{subequations}\label{eqn:P3}\begin{align}\mathop{\mathrm{maximize}}_{\{\mv{w}_{k,n},r_k^{\rm COM},\eta_k,d^n_l,\mv{Q}_n\}} & ~ \sum\limits_{k=1}^K\mu_kr_k^{\rm COM} \label{eqn:P3 sub}  \\
\mathrm {subject ~ to} ~~~~~~ & ~ (\ref{eqn:compression constraint 5}),  (\ref{eqn:P1 constraint 1}),  (\ref{eqn:P3 constraint 2}),  (\ref{eqn:compression constraint 1 convex}), (\ref{eqn:compression constraint 2}) - (\ref{eqn:compression constraint 4}).
\end{align}\end{subequations}Since problem (\ref{eqn:P3}) is a convex problem, it can be globally solved by CVX. The successive convex approximation method based algorithm to problem (\ref{eqn:WSR maximization compression}) is summarized in Algorithm \ref{table4}, which iteratively updates $\tilde{\beta}_k$'s, $\tilde{\mv{w}}_{k,n}$'s, and $\tilde{\mv{Q}}_n$'s based on the solution to problem (\ref{eqn:P3}) as shown in Step 2). Similar to Section \ref{sec:Proposed Solution}, the convergence behaviour of Algorithm \ref{table4} is guaranteed in the following proposition.

\begin{algorithm}[t]
{\bf Initialization}: Set the initial values for $\tilde{\beta}_k$'s, $\tilde{\mv{w}}_{k,n}$'s, and $\tilde{\mv{Q}}_n$'s, and set $t=1$; \\
{\bf Repeat}:
\begin{enumerate}
\item Find the optimal solution to problem (\ref{eqn:P3}) using CVX as $\{\mv{w}_{k,n}^{(t)},(r_k^{\rm COM})^{(t)},\eta_k^{(t)},(d^n_l)^{(t)},\mv{Q}_n^{(t)}\}$;
\item Update $\tilde{\beta}_k=\sqrt{(2^{(r_k^{\rm COM})^{(t)}/B}-1)/\eta_k^{(t)}}$, $\tilde{\mv{X}}_n=\sum_{k=1}^K\mv{w}_{k,n}^{(t)}(\mv{w}_{k,n}^{(t)})^H+\mv{Q}_n^{(t)}$,  $\forall k,n$;
\item $t=t+1$.
\end{enumerate}
{\bf Until} convergence
\caption{Proposed Algorithm for Solving Problem (\ref{eqn:WSR maximization compression})}
\label{table4}
\end{algorithm}

\begin{proposition}\label{proposition3}
Monotonic convergence of Algorithm \ref{table4} is guaranteed, i.e., $\sum_{k=1}^K\mu_k(r_k^{\rm COM})^{(t)}\geq \sum_{k=1}^K \mu_k(r_k^{\rm COM})^{(t-1)}$. Moreover, the converged solution satisfies all the constraints as well as the KKT conditions of problem (\ref{eqn:WSR maximization compression}).
\end{proposition}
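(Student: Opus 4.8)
The plan is to establish the two assertions --- monotone improvement of the weighted sum-rate and satisfaction of the KKT conditions of problem (\ref{eqn:WSR maximization compression}) at convergence --- by following the successive convex approximation argument used in the proof of Proposition \ref{proposition1}, and then isolating the one genuinely new ingredient, namely the log-determinant surrogate (\ref{eqn:compression constraint 1 convex}) for the fronthaul constraint.

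First, I would verify that each convex constraint substituted into problem (\ref{eqn:P3}) has the three properties required by the successive convex approximation framework with respect to the non-convex constraints it replaces: (i) the surrogate left-hand side is a global upper bound on the true left-hand side, so that the feasible set of (\ref{eqn:P3}) is contained in that of (\ref{eqn:WSR maximization compression}); (ii) the surrogate coincides with the true quantity at the current linearization point; and (iii) the gradients of the two coincide at that point. For the rate surrogate (\ref{eqn:P1 constraint 1}) these follow from the arithmetic-geometric-mean bound (\ref{eqn:app}), which is tight exactly when $\tilde{\beta}_k=\sqrt{(2^{r_k^{\rm COM}/B}-1)/\eta_k}$. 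For the fronthaul surrogate (\ref{eqn:compression constraint 1 convex}) they follow from the concavity of $\log_2|\mv{X}_n|$ and its first-order approximation (\ref{eqn:compression first-order approximation}), which is tight and tangent at $\tilde{\mv{X}}_n=\mv{X}_n$; note that only the first log-determinant term is linearized while the convex term $-\log_2|\mv{Q}_n|$ is retained exactly. Constraint (\ref{eqn:P3 constraint 2}) is an exact reformulation of (\ref{eqn:compression constraint 6 eqv 2}) and hence satisfies all three properties trivially.

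Second, for monotonicity I would use property (ii) together with the update rule in Step 2) of Algorithm \ref{table4}. Because the linearization parameters at iteration $t$ are set to the iterate produced at iteration $t-1$, property (ii) guarantees that $\{\mv{w}_{k,n}^{(t-1)},(r_k^{\rm COM})^{(t-1)},\eta_k^{(t-1)},(d^n_l)^{(t-1)},\mv{Q}_n^{(t-1)}\}$ is a feasible point of the convex problem (\ref{eqn:P3}) solved at iteration $t$. Since (\ref{eqn:P3}) is solved to global optimality by CVX, its optimal value is at least the objective evaluated at this feasible point, which equals $\sum_k\mu_k(r_k^{\rm COM})^{(t-1)}$; this yields the claimed chain of inequalities. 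The sequence is bounded above, because the per-RRH power constraints (\ref{eqn:compression constraint 5}) bound the beamformers and the compression noise covariances, hence bound the SINRs in (\ref{eqn:SINR compression}) and the rates, and a bounded monotone sequence converges.

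Finally, for the KKT claim I would argue at a fixed point of the iteration. By property (i), every iterate is feasible for (\ref{eqn:WSR maximization compression}), so the converged point is feasible for the original problem. I would then write down the KKT conditions of the convex program (\ref{eqn:P3}) at the fixed point and compare them term by term with the KKT conditions of (\ref{eqn:WSR maximization compression}). Because, by (ii) and (iii), the surrogate constraint functions match the true constraint functions in both value and gradient at the fixed point, the stationarity condition, primal and dual feasibility, and complementary slackness of the two problems are identical there, so the same multipliers certify KKT optimality for (\ref{eqn:WSR maximization compression}). The main obstacle is the gradient-matching check (iii) for the log-determinant surrogate: I must confirm that the matrix derivatives of the left-hand side of (\ref{eqn:compression constraint 1 convex}) with respect to $\mv{w}_{k,n}$ and $\mv{Q}_n$ agree with those of the true fronthaul rate $T_n$ at $\tilde{\mv{X}}_n=\sum_k\mv{w}_{k,n}\mv{w}_{k,n}^H+\mv{Q}_n$. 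This is the least routine computation, since the surrogate is quadratic in $\mv{w}_{k,n}$ and linear in $\mv{Q}_n$ whereas $T_n$ is the difference of two log-determinants, and a chain-rule or sign slip here would break the KKT equivalence.
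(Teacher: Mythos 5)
Your proposal is correct and follows essentially the same route as the paper, which proves Proposition \ref{proposition3} by invoking the successive convex approximation argument of Appendix \ref{appendix1} (feasibility of the previous iterate for the new surrogate problem gives monotonicity, and the upper-bound/tightness/gradient-matching conditions of \cite[Theorem 1]{Marks78} give the KKT claim), with the log-determinant linearization (\ref{eqn:compression first-order approximation}) playing the role you correctly identify as the only new surrogate to check.
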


\begin{remark}\label{remark4}
It is worth noting that a similar problem to problem (\ref{eqn:WSR maximization compression}) is studied in \cite{Simeone13}, where the RRHs are assumed to be directly connected to the CP via fronthaul links without routers, and the users are assumed to be equipped with multiple antennas. The successive convex optimization technique is also used to jointly optimize the transmit covariance for each user and compression noise covariance for each RRH so as to maximize the weighted sum-rate of all the users subject to the fronthaul link capacity constraints. Note that in this paper, each user $k$ is assigned with one data stream $s_k$ since it is equipped with one antenna, and the transmit covariance for each user is thus of rank one. As a result, if we optimize the transmit covariance as in \cite{Simeone13} instead of the beamforming vectors, it is necessary to add the rank-one constraints for the transmit covariance matrices, which are non-convex. On the contrary, in this paper we directly optimize the beamforming vector for each user as shown in Algorithm \ref{table4}. The obtained solution is shown to satisfy the KKT conditions of problem (\ref{eqn:WSR maximization compression}).
\end{remark}

\section{Numerical Results}\label{sec:Numerical Results}
\begin{table}
\centering
\begin{center}
\caption{System Parameters of the Numerical Example} \label{table5}
{\small
\begin{tabular}{|c|c|}
\hline Channel Bandwidth & $10$ MHz \\
\hline
Cluster Radius & $1$ km \\
\hline
Number of RRHs & $5$ \\
\hline
Number of Antennas per RRH & $2$ \\
\hline
Number of Users & $10$ \\
\hline
RRH Transmit Power Constraint & $43$ dBm \\
\hline
Antenna Gain & $15$ dBi \\
\hline
Path Loss Model & $128.1+37.6\log_{10}(D)$ dB \\
\hline
Log-Normal Shadowing & $8$ dB \\
\hline
Rayleigh Small Scale Fading & $0$ dB \\
\hline
AWGN Power Spectrum Density & $-169$ dBm/Hz \\
\hline
\end{tabular}
}
\end{center}
\end{table}

In this section, we evaluate the performance of the proposed network coding based data-sharing strategy and routing-based compression-based strategy in the downlink multi-hop C-RAN. In this numerical example, there are $N=5$ RRHs, each equipped with $M=2$ antennas, and $K=10$ users randomly distributed in a circle area of radius $1000$m. The bandwidth of the wireless link is $B=10$MHz. The channel vectors are generated from independent Rayleigh fading, while the path loss model of the wireless channel is given as $128.1+37.6\log_{10}(D)$ in dB, where $D$ (in kilometer) denotes the distance between the user and the RRH. The transmit power constraint for each RRH is $P_n=43$dBm, $\forall n$. The power spectral density of the AWGN at each user receiver is assumed to be $-169$dBm/Hz, and the noise figure due to the receiver processing is $7$dB. The above simulation parameters are summarized in Table \ref{table5}. Moreover, the fronthaul network topology together with the capacities of the fronthaul links (denoted by $2C$ or $C/2$) are shown in Fig. \ref{fig2}. Last, for convenience, the rate weights are assumed to be one for all the users in both problems (\ref{eqn:WSR maximization}) and (\ref{eqn:WSR maximization compression}), i.e., sum-rate maximization is considered for the data-sharing strategy and compression-based strategy.
\begin{figure}
\begin{center}
\scalebox{0.7}{\includegraphics*{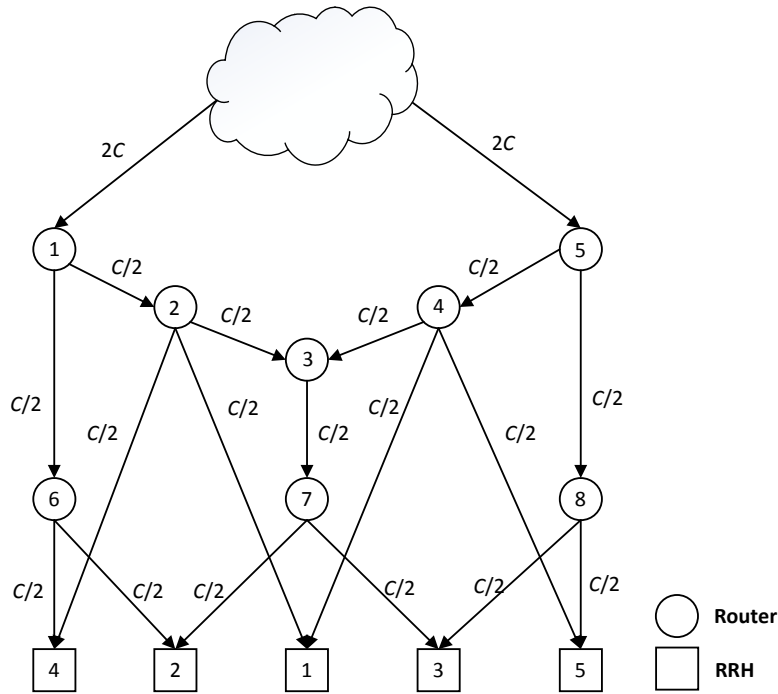}}
\end{center}
\caption{The multi-hop fronthaul topology for C-RAN.}\label{fig2}
\end{figure}

\subsection{Effectiveness of the Proposed Data-Sharing Strategy}
First, we verify the effectiveness of our algorithm proposed in Section \ref{sec:Proposed Solution} to the weighted sum-rate maximization problem (\ref{eqn:WSR maximization}) under the data-sharing strategy. Fig. \ref{fig4} shows the convergence behaviour of the proposed iterative algorithms to problems (\ref{eqn:WSR maximization 1}) and (\ref{eqn:WSR maximization fixed association}), i.e., Algorithms \ref{table1} and \ref{table2}, when $C=200$Mbps and $C=400$Mbps in Fig. \ref{fig2}. Monotonic convergence is observed for both Algorithms \ref{table1} and \ref{table2} with different values of $C$, which verifies Propositions \ref{proposition1} and \ref{proposition2}. Moreover, it is observed that both algorithms converge within $10$ iterations. Last, for both values of $C$, the converged sum-rate of Algorithm \ref{table2} is very close to that of Algorithm \ref{table1}, which verifies that the continuous function $g_{\Phi}(\mv{w}_{k,n})$ given in (\ref{eqn:approximate function}) is a good approximation to the discrete user-RRH association function $\alpha(\mv{w}_{k,n})$ given in (\ref{eqn:user association}) such that the solution to the relaxed problem (\ref{eqn:WSR maximization 1}) is very close to the original problem (\ref{eqn:WSR maximization}).

\begin{figure}
\begin{center}
\scalebox{0.7}{\includegraphics*{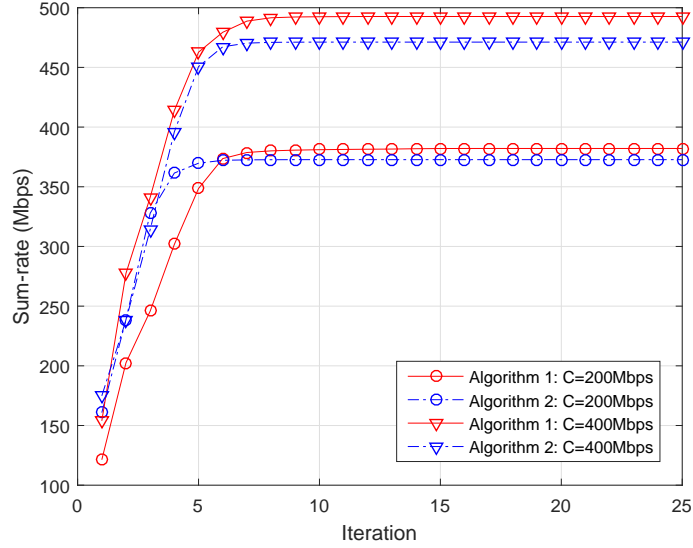}}
\end{center}
\caption{Convergence behaviour of Algorithms $1$ and $2$ in the first and second stages for solving the weighted sum-rate maximization problem (\ref{eqn:WSR maximization}) under the data-sharing strategy.}\label{fig4}
\end{figure}

Next, we verify the effectiveness of our proposed data-sharing strategy. Towards this end, we consider the following three benchmark schemes for performance comparison. For the first benchmark scheme, we consider a strategy where each user is only served by one RRH, as proposed in \cite{Luo14}. Specifically, we first allocate each user to the RRH with the strongest channel power, i.e.,
\begin{align}\label{eqn:single user association}
\alpha_{k,n}=\left\{\begin{array}{ll}1, & {\rm if} ~ n=\arg \max\limits_{1\leq n\leq N} \|\mv{h}_{k,n}\|^2, \\ 0, & {\rm otherwise}, \end{array} \right. \forall k,n.
\end{align}Given the above user-RRH association solution, the CP \mbox{unicasts} each user's data to its associated RRH via routing over the fronthaul network. Note that in a unicast network, the network coding constraints given in (\ref{eqn:constraint 1}) -- (\ref{eqn:constraint 5}) reduce to the unicasting constraints. As a result, the sum-rate of all the users achieved by this scheme can be obtained by solving problem (\ref{eqn:WSR maximization fixed association}) with the user-RRH association solution given in (\ref{eqn:single user association}).


For the second benchmark scheme, we allow each user to be served by multiple RRHs. Specifically, we let each user be served by the $3$ RRHs with the first three strongest channel power. Given the above user-RRH association solution, the sum-rate of all the users achieved by this scheme can be obtained by solving problem (\ref{eqn:WSR maximization fixed association}) using Algorithm \ref{table2}.

For the third benchmark scheme, we still let each user be served by the $3$ RRHs with the first three strongest channel power. However, instead of encoding the received information, in this scheme we assume that each router simply replicates and forwards its received information to the other routers in the multi-hop fronthaul network. Note that with the above replicate-and-forward scheme, the routing constraints (\ref{eqn:constraint 1 fixed association}), (\ref{eqn:constraint 2}) -- (\ref{eqn:constraint 5}) in problem (\ref{eqn:WSR maximization fixed association}) need to be modified. Specifically, the multicast of each user's message is built by Steiner trees. Define $\mathcal{T}_k$ as the set of all the Steiner trees for multicasting user $k$'s message, which is determined by the user-RRH association, and $\mathcal{L}_t$ as the set of all the fronthaul links in a Steiner tree $t$. According to \cite{Agarwal04}, the routing constraints for the replicate-and-forward scheme can be formulated as
\begin{align}
& r_k^{\rm DS}\leq \sum\limits_{t\in \mathcal{T}_k}\tau_{t,k}, ~~~ \forall k, \label{eqn:constraint sub 1} \\
& \sum\limits_{k\in \mathcal{K}, t\in \mathcal{T}_k, l\in \mathcal{L}_t} \tau_{t,k}\leq C_l, ~~~ \forall l, \label{eqn:constraint sub 2} \\
& \tau_{t,k}\geq 0, ~~~ \forall t,k, \label{eqn:constraint sub 3}
\end{align}where $\tau_{t,k}$ denotes the rate for multicasting user $k$'s message via Steiner tree $t$. Via replacing the linear constraints (\ref{eqn:constraint 1 fixed association}), (\ref{eqn:constraint 2}) -- (\ref{eqn:constraint 5}) by the linear constraints (\ref{eqn:constraint sub 1}) -- (\ref{eqn:constraint sub 3}) in problem (\ref{eqn:WSR maximization fixed association}), we are able to obtain the sum-rate achieved by the replicate-and-forward based data-sharing strategy.

Fig. \ref{fig3} shows the users' sum-rate achieved by different schemes under the data-sharing strategy versus different values of $C$. It is observed that our proposed data-sharing strategy achieves much higher throughput than its counterpart without cooperation between RRHs, especially when the value of $C$ is large. This is because our proposed scheme provides a joint beamforming design gain. It is also observed that the proposed network coding based scheme provides up to $30\%$ throughput gain as compared to the scheme when each user is served by three RRHs with strongest channel power. This shows that the user-RRH association plays a significant role on the throughput performance and thus should be carefully optimized. Last, it is observed that when each user is served by three RRHs with strongest channel power, the sum-rate achieved by the replicate-and-forward based data-sharing \mbox{strategy} is very close to that achieved by its counterpart based on network coding. This implies that for the information multicast over the fronthaul network, the gain of the network coding technique over the optimized replicate-and-forward scheme is not significant. (Note that similar observations are also found in the literature, e.g. \cite{Wu04}.) However, as shown in Section \ref{sec:Network coding over fronthaul network}, the Steiner tree packing problem arising from the replicate-and-forward scheme is NP-hard, thus from the algorithm design point of view, the network coding technique is preferred.

\begin{figure}
\begin{center}
\scalebox{0.6}{\includegraphics*{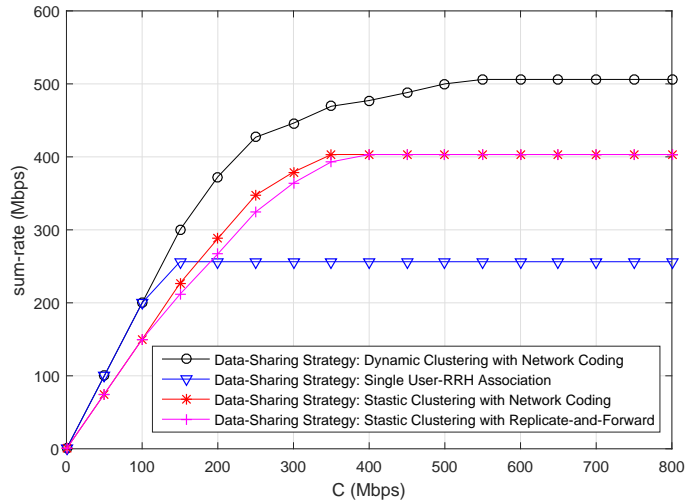}}
\end{center}
\caption{Throughput versus fronthaul link capacity of the data-sharing strategy.}\label{fig3}
\end{figure}

\subsection{Effectiveness of the Proposed Compression-based Strategy}
\begin{figure}
\begin{center}
\scalebox{0.7}{\includegraphics*{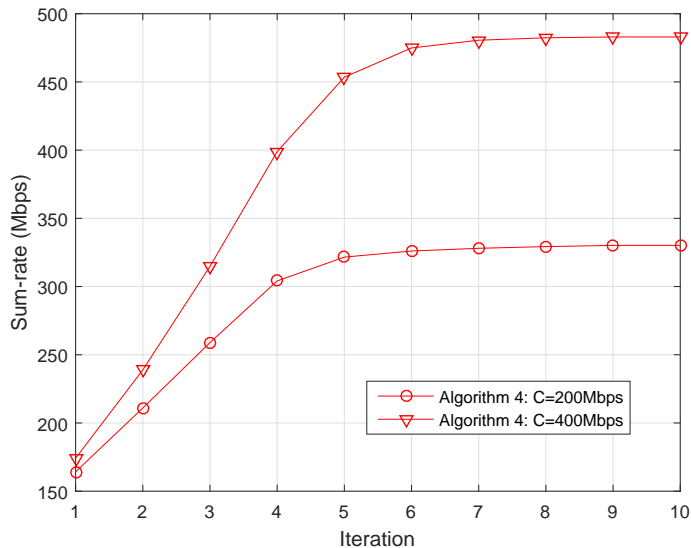}}
\end{center}
\caption{Convergence behaviour of Algorithm \ref{table4} for solving the weighted sum-rate maximization problem (\ref{eqn:WSR maximization compression}) under the compression-based strategy.}\label{fig5}
\end{figure}

In this subsection, we evaluate the performance of the proposed compression-based strategy in the downlink multi-hop C-RAN. Fig. \ref{fig5} shows the convergence behaviour of Algorithm \ref{table4} for problem (\ref{eqn:WSR maximization compression}) when $C=200$Mps and $400$Mbos. Similar to Fig. \ref{fig4}, monotonic convergence is observed for Algorithm \ref{table4}, which verifies Proposition \ref{proposition3}. Moreover, it is observed that Algorithm \ref{table4} converges in less than $10$ iterations for both values of $C$.

\subsection{Comparison between Data-Sharing Strategy and Compression-based Strategy}
\begin{figure}
\begin{center}
\scalebox{0.7}{\includegraphics*{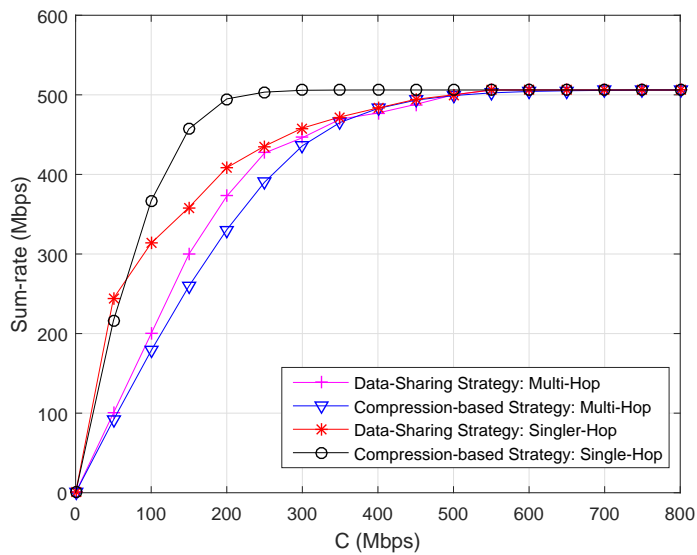}}
\end{center}
\caption{Performance comparison between the data-sharing strategy and compression-based strategy for the multi-hop topology of Fig. \ref{fig2} versus the single-hop topology of Fig. \ref{fig7}.}\label{fig6}
\end{figure}

It is worth noting that the data-sharing strategy and compression-based strategy are two fundamentally different approaches to utilize the fronthaul network in the downlink C-RAN. Under the former strategy, user messages are multicast to the RRHs, while under the latter strategy, each compressed signal is unicast to the corresponding RRH. In this subsection, we aim to answer the following question by simulation results: in the downlink multi-hop C-RAN, which strategy is more efficient for the utilization of the limited capacity in the fronthaul network? Fig. \ref{fig6} provides a performance comparison between the data-sharing strategy and compression-based strategy in terms of the sum-rate of all the users versus the fronthaul link capacity. For the purpose of illustration, we also provide the throughput performance of the data-sharing strategy and compression-based strategy in the case when each RRH is directly connected to the CP via a fronthaul link with capacity $C$, as shown in Fig. \ref{fig7}. Note that in both the setups in Figs. \ref{fig2} and \ref{fig7}, the capacity of the information flow to each RRH is $C$, while the difference is that the routing strategy also influences the throughput performance in the first setup.

It is observed from Fig. \ref{fig6} that in the multi-hop C-RAN, the sum-rate achieved by the data-sharing strategy is higher than that achieved by the compression-based strategy almost for all the values of $C$. Note that this is in sharp contrast to the previous results in \cite{Yu15,Yu16}, which shows that if the routing strategy over the fronthaul network is not considered, in general the compression-based strategy outperforms the data-sharing strategy in terms of both spectral and energy efficiency. Specifically, in this numerical example, it is observed that in the single-hop C-RAN, the compression-based strategy can provide up to $25\%$ performance gain over the data-sharing strategy. By comparing the cases of multi-hop and single-hop C-RAN, it is concluded that although sending the compressed signals is a better option than sending the user messages if the routing strategy is not considered, the data-sharing strategy can utilize the information multicast technique over the fronthaul network, which is more efficient than information unicast of the compression-based strategy, to make up the above disadvantage.

\begin{figure}
\begin{center}
\scalebox{0.7}{\includegraphics*{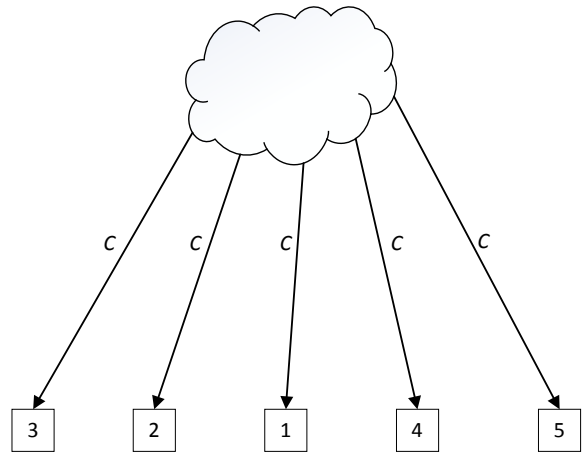}}
\end{center}
\caption{The single-hop C-RAN.}\label{fig7}
\end{figure}

\section{Conclusion}\label{sec:Conclusion}
This paper investigates two fundamentally different techniques for the downlink multi-hop C-RAN, namely data-sharing strategy and compression-based strategy. Different from prior works, apart from the resources in the wireless link, the routing strategy over the multi-hop fronthaul network is considered as well for maximizing the achievable throughput of the downlink C-RAN under both strategies. Specifically, under the data-sharing strategy, the network coding technique is utilized to multicast each user's messages to all the RRHs serving this user, while under the compression-based strategy, a simple routing technique is used to unicast each RRH's compressed signal to the destination. Efficient algorithms with monotonic convergence are proposed under the above cross-layer optimization framework for each strategy, and the obtained solutions are proved to satisfy the KKT conditions of the problems of interests.

Prior works show that if the routing strategy is not considered, the compression-based strategy generally outperforms the data-sharing strategy in terms of spectral efficiency. The main contribution of this paper is that if the routing strategy is jointly optimized with the transmission strategy, the data-sharing strategy can achieve better system throughout than the compression-based strategy in the downlink C-RAN, since information multicast is more efficient than information unicast over the multi-hop fronthaul network. This implies that the data-sharing strategy is also a promising candidate for the downlink communication of the emerging C-RAN.

\begin{appendix}
\subsection{Proof of Proposition \ref{proposition1}}\label{appendix1}
First, it can be shown that in the $t$th iteration of Algorithm \ref{table1}, the solution obtained in the $(t-1)$th iteration is also feasible to problem (\ref{eqn:P1}) given $\tilde{\mv{w}}_{k,n}=\mv{w}_{k,n}^{(t-1)}$, $\tilde{r}_k^{\rm DS}=(r_k^{\rm DS})^{(t-1)}$, and $\tilde{\beta}_k=\sqrt{(2^{r_k^{(t-1)}}-1)/\eta_k^{(t-1)}}$, $\forall k,n$. In other words, $\sum_{k=1}^K\mu_k(r_k^{\rm DS})^{(t-1)}$ is achievable to problem (\ref{eqn:P1}) in the $t$th iteration. As a result, the optimal weighted sum-rate to problem (\ref{eqn:P1}) in the $t$th iteration, i.e., $\sum_{k=1}^K \mu_k(r_k^{\rm DS})^{(t)}$, is no smaller than the optimal weighted sum-rate achieved in the $(t-1)$th iteration, i.e., $\sum_{k=1}^K\mu_k(r_k^{\rm DS})^{(t-1)}$. Monotonic convergence of Algorithm \ref{table1} is thus proved.

Next, since in Algorithm \ref{table1} we use upper-bound to approximate the non-convex functions in problem (\ref{eqn:WSR maximization 1}), as shown in (\ref{eqn:app}), (\ref{eqn:first-order approximation}), and (\ref{eqn:first-order approximation 2}), any feasible solution to problem (\ref{eqn:P1}) satisfies all the constraints of problem (\ref{eqn:WSR maximization 1}). As a result, the solution from Algorithm \ref{table1} must be feasible to problem (\ref{eqn:WSR maximization 1}).

Last, according to \cite[Theorem 1]{Marks78}, if in an optimization problem, each non-convex constraint $f(\mv{x})\leq 0$ is iteratively approximated by a convex constraint $f_{\rm opp}(\mv{x},\tilde{\mv{x}})\leq 0$, where $\tilde{\mv{x}}$ is the optimal solution to the approximated problem in the previous iteration, and $f_{\rm opp}(\mv{x},\tilde{\mv{x}})$ is a convex function satisfying
\begin{align}
& f_{\rm opp}(\mv{x},\tilde{\mv{x}})\geq f(\mv{x}), \label{eqn:condition 1} \\
& f_{\rm opp}(\tilde{\mv{x}},\tilde{\mv{x}})= f(\tilde{\mv{x}}), \label{eqn:condition 2} \\
& \triangledown  f_{\rm opp}(\mv{x},\tilde{\mv{x}})_{|\mv{x}=\tilde{\mv{x}}}=\triangledown f(\mv{x})_{|\mv{x}=\tilde{\mv{x}}}, \label{eqn:condition 3}
\end{align}then the successive convex approximation algorithm can always yield a solution satisfying the KKT conditions of the problem. In the following, we show that constraint (\ref{eqn:P1 constraint 1}) is an approximation to constraint (\ref{eqn:constraint 7 eqv 1}) satisfying the above conditions. First, the inequality (\ref{eqn:app}) implies that $f_{\tilde{\beta}_k}(r_k^{\rm DS},\eta_k)$ is an upper bound to $\sqrt{(2^{r_k^{\rm DS}/B}-1)\eta_k}$, where the equality holds if and only if $\tilde{\beta}_k=\sqrt{(2^{r_k^{\rm DS}/B}-1)/\eta_k}$. Moreover, in Algorithm \ref{table1}, $\tilde{\beta}_k$ is set as $\sqrt{(2^{(r_k^{\rm DS})^{(t)}/B}-1)/\eta_k^{(t)}}$ in each iteration. As a result, the conditions (\ref{eqn:condition 1}) and (\ref{eqn:condition 2}) are satisfied. Next, it can be shown that
\begin{align}
\frac{\partial f_{\tilde{\beta}_k}(r_k^{\rm DS},\eta_k)}{\partial \eta_k} =\frac{\tilde{\beta_k}}{2}=\frac{\sqrt{(2^{r_k^{\rm DS}/B}-1)\eta_k}}{2} =\frac{\partial \sqrt{(2^{r_k^{\rm DS}/B}-1)\eta_k}}{\partial \eta_k}.
\end{align}Similarly, it can be shown that $\partial f_{\tilde{\beta}_k}(r_k^{\rm DS},\eta_k)/\partial r_k^{\rm DS}=\partial \sqrt{(2^{r_k^{\rm DS}/B}-1)\eta_k}/\partial r_k^{\rm DS}$. As a result, constraint (\ref{eqn:P1 constraint 1}) is an approximation to constraint (\ref{eqn:constraint 7 eqv 1}) satisfying the constraints given in (\ref{eqn:condition 1}) -- (\ref{eqn:condition 3}). Moreover, it can be shown that constraint (\ref{eqn:constraint 1 new 1}) is an approximation to constraint (\ref{eqn:constraint 1 log}) satisfying the conditions given in (\ref{eqn:condition 1}) -- (\ref{eqn:condition 3}). As a result, the solution obtained by the successive convex approximation based Algorithm \ref{table1} must satisfy the KKT conditions of problem (\ref{eqn:WSR maximization 1}).

\end{appendix}

\end{document}